\documentclass{article}
\usepackage{graphicx} 
\usepackage{natbib}
\usepackage{hyperref}
\usepackage{amsmath}
\usepackage{amssymb}
\usepackage{amsthm}
\usepackage{mathrsfs}
\usepackage{longtable}
\usepackage{booktabs}
\usepackage{xcolor}
\usepackage{algorithm}
\usepackage{algpseudocode}
\usepackage{centernot}
\usepackage{subcaption}
\usepackage{soul}
\usepackage{comment}
\usepackage{authblk}

\usepackage{tikz,pgf}
\usetikzlibrary{arrows,shapes.arrows,calc,shapes.geometric, shapes.multipart,decorations.pathmorphing,positioning}
\usetikzlibrary{decorations}
\usetikzlibrary[decorations]
\usetikzlibrary{decorations.pathmorphing}
\usetikzlibrary[decorations.pathmorphing]
\usepgflibrary{decorations.pathmorphing}
\usepgflibrary[decorations.pathmorphing]

\newtheorem{theorem}{Theorem}
\newtheorem{corollary}{Corollary}
\newtheorem{Remark}{Remark}

\newcommand{\ind}{\mbox{$\perp \!\!\! \perp$}}
\newcommand{\notind}{\centernot{\ind}}
\algrenewcommand\algorithmicrequire{\textbf{Input:}}
\algrenewcommand\algorithmicensure{\textbf{Output:}}

\begin{document}

\title{\textbf{Exploiting independence constraints for efficient estimation of bounds on causal effects in the presence of unmeasured confounding}}
\author[1]{Ting-Hsuan Chang\thanks{Corresponding author: tc3255@cumc.columbia.edu}}
\author[1]{Caleb H.~Miles}
\author[2]{Ilya Shpitser}
\author[3,4]{Eric J.~Tchetgen Tchetgen}
\author[1]{Daniel Malinsky}
\affil[1]{Department of Biostatistics, Columbia University}
\affil[2]{Department of Computer Science, Johns Hopkins University}
\affil[3]{Department of Biostatistics, Epidemiology and Informatics, University of Pennsylvania}
\affil[4]{Department of Statistics and Data Science, University of Pennsylvania}

\maketitle

\section*{Abstract}
Causal graphs may inform covariate adjustment for estimating causal effects and improve estimation efficiency by exploiting the graphical structure. In many applications, however, the target causal parameter may not be point-identified due to the presence of unmeasured confounding. Sensitivity analysis methods address this challenge by characterizing bounds on the causal parameter under varying assumptions about the magnitude or form of unmeasured confounding. We focus on semiparametric efficient estimation of causal effects in non-identifiable settings, assuming a known (or hypothesized) causal graph. We propose an influence function projection approach that exploits the conditional independence constraints implied by the graph to improve the efficiency of semiparametric estimators of upper and lower bounds on the average causal effect under a given sensitivity analysis model. Our approach applies across multiple sensitivity analysis frameworks and causal estimands, thereby connecting knowledge of graphical structure with the sensitivity analysis literature. We illustrate our approach through simulations and real data examples thought to be affected by unmeasured confounding, including the effect of labor training program on post-intervention earnings, and the effect of low ejection fraction on heart failure death. 

\section{Introduction}\label{sec: intro}

When researchers endeavor to estimate the causal effect of a binary treatment $T$ on an outcome $Y$ from observational data, it is common to adjust for some set of covariates $X$ that is assumed to satisfy the conditional ignorability assumption: $Y(t) \ind T \mid X$ for $t \in \{0,1\}$, where $Y(t)$ denotes the counterfactual (or potential) outcome under treatment $t$ \citep{neyman1923applications,rubin1974estimating}. Directed acyclic graphs (DAGs) encode causal assumptions and indicate the covariates that should be adjusted for to estimate an average causal effect (ACE), $E[Y(1)-Y(0)]$. A substantial literature focuses on identification of the ACE in DAGs: \citet{pearl2009causality}'s backdoor criterion gives sufficient graphical conditions for an adjustment set to satisfy; \citet{shpitser2010validity} later provide a complete graphical criterion for covariate adjustment. There is also work on the complete characterization of ACE identifiability in DAGs in the presence of latent confounders \citep{shpitser2006identification,huang2006pearl}.

Some recent work has focused on \emph{efficient} estimation, using a known or hypothesized causal graph to find an optimal adjustment set. Specifically, several works focus on finding the adjustment set that yields the most efficient estimator (i.e., the one with the smallest asymptotic variance) among all adjustment sets satisfying the criteria of \citet{pearl2009causality} and \citet{shpitser2010validity}. \citet{henckel2022graphical} provide a graphical characterization of the optimal adjustment set in linear causal models. Building on this work, \citet{rotnitzky2020efficient} extend the results to nonparametric causal models and semiparametric efficient estimators of the ACE. In the case of DAGs with latent (unobserved or unmeasured) variables, \citet{bhattacharya2022semiparametric} use a general graphical identification theory to construct semiparametric efficient estimators for the ACE when it is identifiable.

However, in many observational settings, the ACE is \emph{not} point-identified, for example, when no set of covariates satisfies conditional ignorability. Absent an alternative identification strategy or restriction of the causal model, often the best one can do is report how an estimated causal effect varies under different plausible assumptions about the form or strength of unmeasured confounding. A large body of work on sensitivity analysis for the ACE introduces one or more sensitivity parameters to characterize the influence of some hypothetical confounders $U$ on the treatment $T$ and/or the outcome $Y$ (e.g., \citet{rosenbaum1983assessing,ding2016sensitivity,zhang2022semi}). Causal graphs encode conditional independence constraints that may be exploited to improve efficiency in sensitivity analysis. Our objective is to take advantage of the graphical structure -- whether the relevant graph is obtained from expert knowledge or learned from data -- when estimating a range of plausible values for the target causal effect. We derive efficient estimators for sensitivity bounds on the ACE given independence constraints among observed covariates.

One commonly used approach to sensitivity analysis leverages a model that constrains the effect of unmeasured confounding on treatment assignment given covariates \citep{rosenbaum2002observational,tan2006distributional}; variations on this strategy have been recently studied and extended by \citet{kallus2019interval, yadlowsky2022bounds, dorn2023sharp} among others. Another approach instead directly models the departure from the conditional ignorability assumption. This is done by explicitly modeling how the distribution of the unobserved counterfactual outcome $Y(t)$ among units with $T=1-t$ differs from the distribution of the observed outcome $Y(t)$ among units with $T=t$, conditional on observed covariates $X$ \citep{scharfstein1999adjusting,robins2000sensitivity}. For the purposes of exposition, we mainly focus our discussion on the sensitivity model introduced by \citet{robins2000sensitivity} and make use of some recent work by \citet{nabi2024semiparametric}. We also illustrate how the approach can be applied in combination with an alternative sensitivity framework, one studied in \citet{chernozhukov2022long}, which incorporates sensitivity parameters that quantify the influence of unmeasured confounding on both treatment and outcome. Overall, the proposed methodology is quite flexible and can be applied across a range of sensitivity analysis frameworks. 

The next section further explains our motivation. We also provide a brief overview of semiparametric efficiency theory. In Section~\ref{sec: proj IF}, we introduce our main theorems, which show how to exploit conditional independence information in the context of semiparametric estimation of any target parameter. 
Section~\ref{sec: sens analysis} reviews the sensitivity model and semiparametric efficiency results of \citet{robins2000sensitivity} and \citet{nabi2024semiparametric}, and describes how to incorporate graphical structure within this framework using our main theorems. Section~\ref{sec: sim} presents numerical experiments, and Section~\ref{sec: data} applies our method to two data examples analyzed previously by \citet{zhou2023sensitivity}, illustrating the efficiency gains that our method can achieve in practice when the goal is to estimate a possibly confounded causal effect. Section~\ref{sec: discuss} discusses practical considerations and limitations of our approach.

\section{Background}\label{sec: background}

\subsection{Motivation}
Existing literature has shown that exploiting the structural information in a known causal graph (or a causal graph ``discovered" via structure learning algorithms) can be used to select adjustment sets and construct efficient estimators of the ACE when identified. \citet{henckel2022graphical} and \citet{rotnitzky2020efficient} focus on the conditional independence constraints implied by DAGs under the assumptions of \emph{causal sufficiency} (no unmeasured confounders) and \emph{faithfulness} (every conditional independence in the data corresponds to d-separation in the graph). \citet{bhattacharya2022semiparametric} work with acyclic directed mixed graphs (ADMGs), which are constructed from DAGs with latent variables via the operation of latent projection. All of these contributions assume that the ACE is identifiable from the observed data.

We consider settings that involve unmeasured confounders and in which the ACE cannot be point-identified. The structural knowledge in such settings may be depicted by a DAG with latent variables (e.g., Figure~\ref{fig: causal graphs a}), an ADMG constructed by latent projection from such a DAG (e.g., Figure~\ref{fig: causal graphs b}), or by a \emph{partial ancestral graph} (PAG), a mixed graph where a circle ($\circ$) at an edge endpoint indicates that the endpoint could be either a ``tail" or an ``arrowhead" (e.g., Figure~\ref{fig: causal graphs c} shows a PAG that could be learned from data generated under the DAG in Figure~\ref{fig: causal graphs a} with $U$ unobserved). A PAG represents a Markov equivalence class (i.e., a set of graphs that encode the same conditional independence constraints) of \emph{maximal ancestral graphs}, each of which can be constructed from a DAG with latent variables. PAGs are of particular interest since these can be learned from  data using well-studied structure learning algorithms that account for latent variables, such as the Fast Causal Inference (FCI) algorithm \citep{spirtes2000causation, zhang2008completeness,colombo2012learning}. 
The graphs in Figure~\ref{fig: causal graphs} depict an example where adjusting for the two observed covariates, $X_1$ and $X_2$, does not suffice to identify the ACE of $T$ on $Y$ because of the latent confounder $U$. All graphs in Figure~\ref{fig: causal graphs} also imply, via the d-separation criterion \citep{pearl2009causality}, that $X_1$ and $X_2$ are marginally independent. Our goal is to take advantage of such independence constraints to improve the efficiency of semiparametric ACE estimators in sensitivity analysis, thereby connecting knowledge of graphical structure with the sensitivity analysis literature.

Before reviewing the key semiparametric concepts required to construct efficient estimators, we first introduce the notation used throughout the remainder of this paper.

\begin{figure}
    \centering
    \begin{subfigure}[b]{0.3\textwidth}
        \centering
        \begin{tikzpicture}[->,>=stealth,shorten >=1.5pt, transform shape]
            \node (a) at (0,1) {$X_1$};
            \node (b) at (0,-1) {$X_2$};
            \node (c) at (1,0) {$T$} edge [<-, line width = 1pt] (b) edge [<-, line width = 1pt] (a);
            \node (d) at (3,0) {$Y$} edge [<-, line width = 1pt] (c);
            \node (e) at (2,1) {$U$} edge [->, line width = 1pt, dotted] (c) edge [->, line width = 1pt, dotted] (d);
        \end{tikzpicture}
        \caption{}\label{fig: causal graphs a}
    \end{subfigure}
    \hfill
    \begin{subfigure}[b]{0.3\textwidth}
        \centering
        \begin{tikzpicture}[->,>=stealth,shorten >=1.5pt, transform shape]
            \node (a) at (0,1) {$X_1$};
            \node (b) at (0,-1) {$X_2$};
            \node (c) at (1,0) {$T$} edge [<-, line width = 1pt] (b) edge [<-, line width = 1pt] (a);
            \node (d) at (3,0) {$Y$} edge [<-, line width = 1pt] (c) edge [<->, line width = 1pt, bend right] (c);
        \end{tikzpicture}
        \caption{}\label{fig: causal graphs b}
    \end{subfigure}
    \hfill
    \begin{subfigure}[b]{0.3\textwidth}
        \centering
        \begin{tikzpicture}[->,>=stealth,shorten >=1.5pt, transform shape]
            \node (a) at (0,1) {$X_1$};
            \node (b) at (0,-1) {$X_2$};
            \node (c) at (1,0) {$T$} edge [<-o, line width = 1pt] (b) edge [<-o, line width = 1pt] (a);
            \node (d) at (3,0) {$Y$} edge [<-o, line width = 1pt] (c) edge [<-o, line width = 1pt] (b) edge [<-o, line width = 1pt] (a);
        \end{tikzpicture}
        \caption{}\label{fig: causal graphs c}
    \end{subfigure}
    \caption{Causal graphs representing a setting with unmeasured confounding: (a) directed acyclic graph (DAG) in which $U$ confounds the causal effect of $T$ on $Y$ (dashed nodes and edges denote unobserved variables and their causal links); (b) the corresponding acyclic directed mixed graph (ADMG), where the bi-directed arrow between $T$ and $Y$ is induced by $U$; (c) the corresponding partial ancestral graph (PAG), where circle ($\circ$) edge endpoints represent uncertainty of edge orientation.}\label{fig: causal graphs}
\end{figure}

\subsection{Notation}
Let $X=\{X_1,...,X_p\}$ be the $p$-dimensional vector of observed covariates, and let $V=\{1,...,p\}$ denote its index set. For any distinct $i,j \in V$, $X_i \ind X_j \mid X_S$ means $X_i$ and $X_j$ are conditionally independent given the covariate set $X_S$, where $S \subseteq V \setminus \{i,j\}$. Let $X_{-\{i,j,S\}}$ denote the covariate set $X \setminus \{X_i,X_j,X_S\}$.

For each individual $i=1,...,n$, denote the observed data vector as $Z_i=\{X,T,Y\}_i$. $Z_1,...,Z_n$ are assumed to be i.i.d. with density $p_Z$, which factorizes as
    \begin{equation*}
         p_{Y|T,X}(y|t,x)\underbrace{p_{T|X}(t|x)}_{:= \pi_t(x)}p_X(x).
    \end{equation*}
Let $\hat{p}_Z$ denote an estimate of $p_Z$, characterized by the estimated components $\hat{p}_{Y|T,X}(y|t,x)$, $\hat{\pi}_t(x)$, and $\hat{p}_X(x)$. When the data is split into $K$ disjoint subsets, let $S_i \in \{1,...k,...,K\}$ indicate the subset to which individual $i$ belongs. Denote by $n_k$ the sample size of the $k$-th subset. For each $k$, let $\hat{p}_Z^{(-k)}$ denote the estimate of $p_Z$ constructed from the full data except subset $k$.

\subsection{Semiparametric efficiency theory}
A \emph{statistical model} is a set of densities indexed by unknown parameter $\theta \in \Theta$, 
    \begin{equation*}
        \mathscr{P} = \{p(z;\theta): \theta \in \Theta\}.
    \end{equation*}
We focus on semiparametric (including nonparametric) settings, where the parameter space is infinite-dimensional, $\text{dim}(\Theta)=\infty$. 

For a semiparametric model $\mathscr{P}$, let $\beta=g(\theta) \in \mathbb{R}^q$ ($q<\infty$) be the parameter of interest. An estimator $\hat{\beta}_n$ is \emph{asymptotically linear} at $p(z;\theta_0) \in \mathscr{P}$ if there exists a measurable function $\phi(Z;\theta_0) \in \mathbb{R}^q$  with mean $E_{\theta_0}[\phi]=0$ and variance $E_{\theta_0}[\phi\phi^\intercal]<\infty$ such that 
    \begin{equation*}
        n^{1/2}(\hat{\beta}_n - \beta_0) = n^{-1/2}\sum_{i=1}^n \phi(Z_i;\theta_0) + o_p(1),
    \end{equation*}
where $\beta_0=g(\theta_0)$ is the true parameter value and $\phi(Z_i;\theta_0)$ is the \emph{influence function} (IF) for $\hat{\beta}_n$ of the $i$-th observation \citep{tsiatis2006semiparametric}. By the central limit theorem,
    \begin{equation*}
        n^{1/2}(\hat{\beta}_n - \beta_0) \to_d N(0,E_{\theta_0}[\phi\phi^\intercal]),
    \end{equation*}
where $\to_d$ denotes ``convergence in distribution" and $E_{\theta_0}[\phi\phi^\intercal]$ is the asymptotic variance of $\hat{\beta}_n$. Here $E_{\theta_0}[\cdot]$ denotes the expectation operates under the density $p(z;\theta_0)$. For notational simplicity, we omit the subscript in the remainder of the paper, but all expectations and $\phi(Z)$ should be understood as defined with respect to the true data generating distribution.

A \emph{parametric submodel} is defined as a smooth family of densities parametrized by finite-dimensional $\delta$, 
    \begin{equation*}
        \mathscr{P}_\delta = \{p(z;\delta): \delta \in \mathbb{R}^d\}, \quad d<\infty,
    \end{equation*}
that satisfies the conditions: (1) $\mathscr{P}_\delta \subset \mathscr{P}$; (2) there exists $\delta_0 \in \mathbb{R}^d$ such that $p(z;\delta_0) \in \mathscr{P}_\delta$ is the true data generating distribution. 

Consider the Hilbert space of $q$-dimensional, measurable functions with mean zero and finite second moment,
    \begin{equation*}
        \mathcal{H} = \{h(Z) \in \mathbb{R}^q: E[h(Z)]=0, E[h(Z)^\intercal h(Z)]<\infty\},
    \end{equation*}
equipped with the inner product $\langle h_1,h_2 \rangle = E[h_1(Z)^\intercal h_2(Z)]$. The \emph{parametric submodel tangent space} is the subspace $\mathscr{T}_\delta \subset \mathcal{H}$ spanned by the $d$-dimensional score vectors for $\delta$: 
    \begin{equation*}
        \mathscr{T}_\delta = \{BS_\delta(Z;\delta_0): B \in \mathbb{R}^{q \times d}\},
    \end{equation*}
where 
    \begin{equation*}
        S_\delta(z;\delta_0) = \frac{\partial}{\partial \delta}  \log p(z;\delta)\Bigg|_{\delta=\delta_0}.  
    \end{equation*}
The \emph{semiparametric tangent space}, $\mathscr{T}$, is defined as the mean-square closure of all parametric submodel tangent spaces. Intuitively, $\mathscr{T}$ gathers every possible ``direction" in which the true density can be perturbed, as long as this direction is well-approximated by some finite‑dimensional parametric submodel.

Let $\mathscr{T}^\perp$ be the orthocomplement of $\mathscr{T}$ in $\mathcal{H}$. The following Theorem can be used to derive the \emph{efficient influence function} (EIF).

\begin{theorem}[Theorem 4.3 of \citet{tsiatis2006semiparametric}]
\label{thm: EIF} 
If a semiparametric, regular and asymptotically linear (RAL; see \citet{tsiatis2006semiparametric} for the definition of a regular estimator) estimator for $\beta$ exists, then the influence function (IF) of this estimator must lie in the space $\{\phi(Z) + \mathscr{T}^\perp\}$, where $\phi(Z)$ is the IF of any semiparametric RAL estimator for $\beta$. The projection of any such $\phi(Z)$ onto the semiparametric tangent space will yield the efficient influence function:
    \begin{equation*}
        \phi^\text{eff}(Z)=\phi(Z) - \Pi[\phi(Z)|\mathscr{T}^\perp]=\Pi[\phi(Z)|\mathscr{T}],
    \end{equation*}
which uniquely minimizes $E[\phi^\intercal\phi]$ over all IFs. An RAL estimator for $\beta$ that achieves the semiparametric efficiency bound, $E[\phi^\text{eff}\phi^{\text{eff}^\intercal}]$, is a semiparametric efficient estimator.
\end{theorem}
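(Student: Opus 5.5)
The plan follows the standard two-part argument from \citet{tsiatis2006semiparametric}: first characterize the set of all influence functions of RAL estimators, then identify the efficient one as the unique minimum-norm element of that set.

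\textbf{Step 1: characterize influence functions through pathwise differentiability.} Let $\hat\beta_n$ be any RAL estimator with influence function $\phi$. Fix a parametric submodel $\mathscr{P}_\delta$ through the truth, and consider local alternatives $\delta_n=\delta_0+h/\sqrt n$. Regularity means the limiting law of $\sqrt n(\hat\beta_n-\beta(\delta_n))$ does not depend on $h$. Asymptotic linearity, Le Cam's third lemma and contiguity give the joint limit of $\bigl(n^{-1/2}\sum\phi(Z_i),\ \text{log-likelihood ratio}\bigr)$. Under $\delta_n$, the limit of $\sqrt n(\hat\beta_n-\beta_0)$ is therefore shifted by $E[\phi S_\delta^\intercal]h$. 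Equating this shift with the shift $\partial\beta(\delta_0)/\partial\delta^\intercal\,h$ forced by regularity yields
\[
\frac{\partial \beta(\delta_0)}{\partial\delta^\intercal}=E[\phi(Z)S_\delta(Z;\delta_0)^\intercal]
\]
for every submodel. This step carries the real content, and I expect it to be the main obstacle: one has to verify local asymptotic normality of the submodel and handle the third-lemma argument carefully. I would import it from Tsiatis' Theorem 3.2 rather than reprove it.

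\textbf{Step 2: deduce that any two influence functions differ by an element of $\mathscr{T}^\perp$.} If $\phi_1$ and $\phi_2$ are both influence functions, then $E[(\phi_1-\phi_2)S_\delta^\intercal]=0$ for every submodel score. By linearity and mean-square continuity of the inner product, $\phi_1-\phi_2$ is then orthogonal to the closure $\mathscr{T}$, so $\phi_1\in\phi_2+\mathscr{T}^\perp$.

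For the reverse direction, note that Step 1 only constrains $\phi$ through its inner products with the scores. Hence any $\phi+\ell$ with $\ell\in\mathscr{T}^\perp$ satisfies the same derivative identity. This means it is a valid influence function in the sense of the theorem's characterization: it is a gradient of the pathwise derivative, and one can construct an RAL estimator attaining it, for instance by one-step adjustment, although the statement only needs the inclusion.

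\textbf{Step 3: identify the efficient element by a Hilbert-space projection.} Since $\mathscr{T}$ is a closed subspace of $\mathcal H$, write $\phi=\Pi[\phi|\mathscr{T}]+\Pi[\phi|\mathscr{T}^\perp]$. Then $\phi^{\text{eff}}:=\phi-\Pi[\phi|\mathscr{T}^\perp]=\Pi[\phi|\mathscr{T}]$ lies in $\phi+\mathscr{T}^\perp$. This element does not depend on which $\phi$ we start from, because the influence functions differ only by elements of $\mathscr{T}^\perp$, which project to zero on $\mathscr{T}$.

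For any other influence function $\phi'=\phi^{\text{eff}}+\ell$ with $\ell\in\mathscr{T}^\perp$, orthogonality of $\phi^{\text{eff}}\in\mathscr{T}$ and $\ell$ gives
\[
E[\phi'\phi'^\intercal]=E[\phi^{\text{eff}}\phi^{\text{eff}\intercal}]+E[\ell\ell^\intercal].
\]
The same decomposition holds for $E[\phi'^\intercal\phi']$. So $\phi^{\text{eff}}$ uniquely minimizes $E[\phi^\intercal\phi]$, with equality only when $\ell=0$. Its variance $E[\phi^{\text{eff}}\phi^{\text{eff}\intercal}]$ is therefore the efficiency bound, which by definition is attained by any RAL estimator whose influence function is $\phi^{\text{eff}}$.
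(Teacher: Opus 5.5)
Your proposal is correct and essentially reproduces the argument of Theorems 3.2 and 4.3 of \citet{tsiatis2006semiparametric}, which the paper cites rather than proves. The one part of Tsiatis's treatment you skip is identifying $E[\phi^{\text{eff}}\phi^{\text{eff}\intercal}]$ with the supremum of the parametric-submodel Cram\'er--Rao bounds (his definition of the efficiency bound), which is harmless here because the paper's statement simply names that matrix as the bound. Two small cautions: your aside that a one-step estimator realizes every element of $\phi+\mathscr{T}^\perp$ needs extra rate and regularity conditions and can fail in general (but, as you note, only the inclusion is used), and the matrix identity $E[\phi'\phi'^{\intercal}]=E[\phi^{\text{eff}}\phi^{\text{eff}\intercal}]+E[\ell\ell^{\intercal}]$ needs $E[\phi^{\text{eff}}\ell^{\intercal}]=0$, which follows from trace-orthogonality only because $\mathscr{T}$ is closed under left-multiplication by $q\times q$ matrices.
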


\section{Projecting an influence function onto a conditional independence submodel}\label{sec: proj IF}
Let $\mathscr{P}_0=\{p(z;\theta):\theta \in \Theta\}$ be an arbitrary model, and let $\beta=g(\theta) \in \mathbb{R}^q$ denote the parameter of interest. Define $\mathscr{P}_{ij.S} \subseteq \mathscr{P}_0$ as the submodel satisfying $X_i \ind X_j \mid X_S$ for some $\{i,j,S\} \subseteq V$. That is, $\mathscr{P}_{ij.S}$ is the set of densities in $\mathscr{P}_0$ such that $p(x)$ factorizes as
    \begin{equation*}
        p(x)=p(x_{-\{i,j,S\}}|x_i,x_j,x_S)p(x_i|x_S)p(x_j|x_S)p(x_S).
    \end{equation*}
With no further parametric restrictions on $p(z;\theta)$, Theorem 4.5 of \citet{tsiatis2006semiparametric} implies that the tangent space for $\mathscr{P}_{ij.S}$ is:
    \begin{equation*}
    \begin{split} 
        \mathscr{T}_{ij.S} = \{&\alpha_1(X_S) + \alpha_2(X_i,X_S) + \alpha_3(X_j,X_S) + \alpha_4(X) + \alpha_5(T,X) + \alpha_6(Y,T,X): \\ &E[\alpha_1]=E[\alpha_2|X_S]=E[\alpha_3|X_S]=E[\alpha_4|X_i,X_j,X_S]=E[\alpha_5|X]=E[\alpha_6|T,X]=0\},
     \end{split}
    \end{equation*}
where $\alpha_1(\cdot),...,\alpha_6(\cdot)$ belong to the Hilbert space of $q$-dimensional mean-zero functions. By Theorem 4.5 of \citet{tsiatis2006semiparametric} and the fact that orthogonal elements in the Hilbert space have  zero inner product (i.e., zero covariance), the orthocomplement of $\mathscr{T}_{ij.S}$ is:
    \begin{align*}
        \mathscr{T}_{ij.S}^\perp &= \{h(Z)-\Pi[h(Z)|\mathscr{T}_{ij.S}]: E[h]=0^{q\times 1}, E[h^\intercal h]<\infty\} \\
        &= \{h(X_i,X_j,X_S): E[h|X_i,X_S]=E[h|X_j,X_S]=E[h|X_S]=0^{q\times 1}\} \\
        &= \{h(X_i,X_j,X_S)-E[h(X_i,X_j,X_S)|X_i,X_S]-E[h(X_i,X_j,X_S)|X_j,X_S]+E[h(X_i,X_j,X_S)|X_S]\},
    \end{align*}
where $h(\cdot) \in \mathcal{H}$.

Theorem~\ref{thm: EIF} implies that the IF of an RAL estimator for $\beta$ (if it exists) in the model $\mathscr{P}_{ij.S}$ can be written as
    \begin{equation*}
        \phi_{ij.S}(Z) = \phi(Z) + \lambda,
    \end{equation*}
where $\phi(Z)$ is any IF of an RAL estimator for $\beta$ in the model $\mathscr{P}_0$ and $\lambda \in \mathscr{T}_{ij.S}^\perp$. The EIF is given by:
    \begin{equation*}
        \phi_{ij.S}^\text{eff}(Z) = \phi(Z) - \Pi[\phi(Z)|\mathscr{T}^\perp_{ij.S}] = \Pi[\phi(Z)|\mathscr{T}_{ij.S}].
    \end{equation*}
The following Theorem presents our main result, which characterizes the projection of an IF onto a submodel. The proof is in Appendix \ref{appendix: pf thm}.
\begin{theorem}
\label{thm: proj IF}
Let $\phi(Z)$ be an influence function of a regular and asymptotically linear (RAL) estimator for some finite-dimensional parameter $\beta$ in an arbitrary model $\mathscr{P}_0$. Define $\mathscr{P}_{ij.S} \subseteq \mathscr{P}_0$ as the submodel satisfying $X_i \ind X_j \mid X_S$. Then the efficient influence function of an RAL estimator for $\beta$ in the model $\mathscr{P}_{ij.S}$ is given by:
    \begin{equation*}
        \phi_{ij.S}^\text{eff}(Z) = \phi(Z) - E[\phi(Z)|X_i,X_j,X_S] + E[\phi(Z)|X_i,X_S] + E[\phi(Z)|X_j,X_S] - E[\phi(Z)|X_S].
    \end{equation*}
\end{theorem}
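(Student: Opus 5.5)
The plan is to compute $\Pi[\phi(Z)\mid\mathscr{T}^\perp_{ij.S}]$ explicitly and then apply Theorem~\ref{thm: EIF} in the form $\phi^\text{eff}_{ij.S}=\phi-\Pi[\phi\mid\mathscr{T}^\perp_{ij.S}]$.

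We will use the third characterization of $\mathscr{T}_{ij.S}^\perp$ displayed above. For $h\in\mathcal{H}$ that depends only on $(X_i,X_j,X_S)$, define
\begin{equation*}
\Lambda(h)=h-E[h\mid X_i,X_S]-E[h\mid X_j,X_S]+E[h\mid X_S].
\end{equation*}
We will use the candidate
\begin{equation*}
\lambda^*=\Lambda\bigl(E[\phi\mid X_i,X_j,X_S]\bigr),
\end{equation*}
which expands to
\begin{equation*}
\lambda^*=E[\phi\mid X_i,X_j,X_S]-E[\phi\mid X_i,X_S]-E[\phi\mid X_j,X_S]+E[\phi\mid X_S].
\end{equation*}
The expansion uses the tower property, valid because each of the conditioning $\sigma$-fields is contained in $\sigma(X_i,X_j,X_S)$. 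Then $\phi-\lambda^*$ is exactly the claimed expression, so it suffices to establish two facts:
\begin{itemize}
\item[(a)] $\lambda^*\in\mathscr{T}^\perp_{ij.S}$;
\item[(b)] $\phi-\lambda^*\perp\mathscr{T}^\perp_{ij.S}$, equivalently $E[(\phi-\lambda^*)^\intercal\lambda]=0$ for every $\lambda\in\mathscr{T}^\perp_{ij.S}$.
\end{itemize}

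For (a), we check directly that $\lambda^*$ is a function of $(X_i,X_j,X_S)$ and that $E[\lambda^*\mid X_i,X_S]=0$. This is the one step where the submodel assumption $X_i\ind X_j\mid X_S$ is used. Under it,
\begin{equation*}
E\bigl[E[\phi\mid X_j,X_S]\mid X_i,X_S\bigr]=E[\phi\mid X_S],
\end{equation*}
because a function of $(X_j,X_S)$ has the same conditional mean given $(X_i,X_S)$ as given $X_S$. The four terms of $\lambda^*$ then cancel in pairs. The same argument gives $E[\lambda^*\mid X_j,X_S]=0$, and these two together imply $E[\lambda^*\mid X_S]=0$.

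For (b), fix $\lambda\in\mathscr{T}^\perp_{ij.S}$; it is a function of $(X_i,X_j,X_S)$ with $E[\lambda\mid X_i,X_S]=E[\lambda\mid X_j,X_S]=0$. We decompose $\phi-\lambda^*$ into three pieces and show each is orthogonal to $\lambda$.
\begin{itemize}
\item The piece $\phi-E[\phi\mid X_i,X_j,X_S]$: conditioning on $(X_i,X_j,X_S)$ shows its inner product with $\lambda$ is zero.
\item The pieces $E[\phi\mid X_i,X_S]$ and $E[\phi\mid X_S]$: both are functions of $(X_i,X_S)$, so each is orthogonal to $\lambda$ because $E[\lambda\mid X_i,X_S]=0$.
\item The piece $E[\phi\mid X_j,X_S]$: it is orthogonal to $\lambda$ because $E[\lambda\mid X_j,X_S]=0$.
\end{itemize}
Hence $\lambda^*=\Pi[\phi\mid\mathscr{T}^\perp_{ij.S}]$, and Theorem~\ref{thm: EIF} gives the stated formula.

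The main obstacle is ensuring that the description of $\mathscr{T}^\perp_{ij.S}$ is correct when $\mathscr{P}_0$ is arbitrary rather than nonparametric; the paper's displayed tangent space implicitly treats $\mathscr{P}_0$ as unrestricted apart from the independence. One also needs to confirm that $\phi$ itself remains a valid influence function in the submodel, so that Theorem~\ref{thm: EIF} applies. The submodel's tangent space is contained in that of $\mathscr{P}_0$, so pathwise differentiability is inherited and $\phi$ is still a gradient.

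If $\mathscr{P}_0$ carries extra restrictions, the argument should be phrased more carefully: the projection should be taken onto the orthocomplement of the added constraint within $\mathscr{P}_0$'s tangent space. We take the paper's displayed characterization as given, in which case the computation above is the complete proof.
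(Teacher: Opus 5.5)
Your proposal is correct and follows essentially the same route as the paper's proof. The paper also takes $h^* = E[\phi\mid X_i,X_j,X_S]-E[\phi\mid X_i,X_S]-E[\phi\mid X_j,X_S]+E[\phi\mid X_S]$ as $\Pi[\phi\mid\mathscr{T}^\perp_{ij.S}]$, checks that the residual is orthogonal to $\mathscr{T}^\perp_{ij.S}$ via the same four iterated-expectation identities, and then invokes Theorem~\ref{thm: EIF}. Your step (a) makes explicit where $X_i\ind X_j\mid X_S$ is used to place $h^*$ in $\mathscr{T}^\perp_{ij.S}$, which the paper uses only implicitly when simplifying $f(Z)$, and your caveat that $\mathscr{P}_0$ must be unrestricted apart from the independence applies equally to the paper's argument.
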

As a corollary, we state the result for marginal independence restrictions (i.e., $X_S = \emptyset$).
\begin{corollary}
\label{cor}
Let $\phi(Z)$ be an influence function of a regular and asymptotically linear (RAL) estimator for some finite-dimensional parameter $\beta$ in an arbitrary model $\mathscr{P}_0$. Define $\mathscr{P}_{ij} \subseteq \mathscr{P}_0$ as the submodel satisfying $X_i \ind X_j$. Then the efficient influence function of an RAL estimator for $\beta$ in the model $\mathscr{P}_{ij}$ is given by:
    \begin{equation*}
        \phi_{ij}^\text{eff}(Z) = \phi(Z) - E[\phi(Z)|X_i,X_j] + E[\phi(Z)|X_i] + E[\phi(Z)|X_j] - E[\phi(Z)].
    \end{equation*}
\end{corollary}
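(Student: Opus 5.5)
Since the corollary is the special case $X_S=\emptyset$ of Theorem~\ref{thm: proj IF}, I will prove the theorem. The plan rests on Theorem~\ref{thm: EIF}: $\phi^{\text{eff}}_{ij.S}=\phi-\Pi[\phi\mid\mathscr{T}^\perp_{ij.S}]$. It therefore suffices to compute the projection of $\phi$ onto $\mathscr{T}^\perp_{ij.S}$, using the characterization already given:
\[
\mathscr{T}^\perp_{ij.S}=\{h(X_i,X_j,X_S): E[h\mid X_i,X_S]=E[h\mid X_j,X_S]=0\}.
\]
The candidate is
\[
\lambda^*=E[\phi\mid X_i,X_j,X_S]-E[\phi\mid X_i,X_S]-E[\phi\mid X_j,X_S]+E[\phi\mid X_S],
\]
and subtracting it from $\phi$ gives exactly the claimed formula. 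Two things need to be checked: (a) $\lambda^*\in\mathscr{T}^\perp_{ij.S}$, and (b) $\phi-\lambda^*\perp\mathscr{T}^\perp_{ij.S}$.

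For (b), take an arbitrary $h\in\mathscr{T}^\perp_{ij.S}$ and show that each of the four terms of $\phi-\lambda^*$ except the first is orthogonal to $h$, and that the first is handled by conditioning.
\begin{itemize}
\item By iterated expectations, $E[\phi^\intercal h]=E\{E[\phi\mid X_i,X_j,X_S]^\intercal h\}$, since $h$ is a function of $(X_i,X_j,X_S)$. So $\phi-E[\phi\mid X_i,X_j,X_S]\perp h$.
\item $E\{E[\phi\mid X_i,X_S]^\intercal h\}=E\{E[\phi\mid X_i,X_S]^\intercal E[h\mid X_i,X_S]\}=0$.
\item The same argument applies to the $(X_j,X_S)$ and $X_S$ terms, because $E[h\mid X_S]=0$.
\end{itemize}
Notably, (b) does not use the independence.

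For (a), note that $\lambda^*$ is a function of $(X_i,X_j,X_S)$, and compute $E[\lambda^*\mid X_i,X_S]$ term by term:
\begin{itemize}
\item the first term gives $E[\phi\mid X_i,X_S]$, which cancels the second;
\item the fourth term gives $E[\phi\mid X_S]$;
\item the third term requires $E\{E[\phi\mid X_j,X_S]\mid X_i,X_S\}$.
\end{itemize}
This third term is the one step where the model restriction is essential. Under $X_i\ind X_j\mid X_S$, the conditional distribution of $X_j$ given $(X_i,X_S)$ equals that given $X_S$. Hence
\[
E\{f(X_j,X_S)\mid X_i,X_S\}=E\{f(X_j,X_S)\mid X_S\}=E[\phi\mid X_S],
\]
which cancels the fourth term. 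The symmetric computation gives $E[\lambda^*\mid X_j,X_S]=0$. Finiteness of second moments follows from $\phi\in\mathcal{H}$ and the contraction property of conditional expectation.

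The main obstacle is conceptual rather than computational. All expectations must be taken under the true law, which lies in $\mathscr{P}_{ij.S}$, so the independence is available exactly in step (a). I also need that $\phi$, an influence function in the larger model $\mathscr{P}_0$, is still a valid influence function in the submodel, which lets Theorem~\ref{thm: EIF} be applied with this $\phi$. This holds because the tangent space of the submodel is contained in that of $\mathscr{P}_0$, so pathwise differentiability is inherited.

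One caveat: the displayed $\mathscr{T}_{ij.S}$ is the submodel's tangent space only when $\mathscr{P}_0$ is otherwise unrestricted. If $\mathscr{P}_0$ imposes further restrictions, the clean statement is that $\lambda^*$ is the projection onto the orthocomplement induced by the independence constraint, and I will phrase the argument that way. Finally, setting $X_S=\emptyset$, with conditioning on $\emptyset$ meaning the unconditional expectation, yields Corollary~\ref{cor}.
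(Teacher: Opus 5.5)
Your proposal is correct and follows essentially the same route as the paper's proof in Appendix~\ref{appendix: pf thm}. The paper proposes $E[\phi\mid X_i,X_j,X_S]-E[\phi\mid X_i,X_S]-E[\phi\mid X_j,X_S]+E[\phi\mid X_S]$ as $\Pi[\phi\mid\mathscr{T}^\perp_{ij.S}]$, checks by iterated expectations that the residual is orthogonal to $\mathscr{T}^\perp_{ij.S}$, applies Theorem~\ref{thm: EIF}, and specializes to $X_S=\emptyset$; your step (a) makes explicit the use of $X_i\ind X_j\mid X_S$ that the paper leaves implicit when it says its choice of $h^*$ ``leads to'' the displayed $f(Z)$, since that step needs $E[h^*\mid X_i,X_S]=E[h^*\mid X_j,X_S]=E[h^*\mid X_S]=0$.
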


Because the asymptotic variance of an RAL estimator for $\beta$ is given by the variance of its IF, projecting the IF onto the submodel $\mathscr{P}_{ij}$ reduces the asymptotic variance by $\text{Var}(\phi(Z))-\text{Var}(\phi_{ij}^\text{eff}(Z))$. It can be shown that this difference is equal to:
    \begin{equation}\label{eq: var diff}
        \text{Var}(E[\phi(Z)|X_i,X_j]) - \text{Var}(E[\phi(Z)|X_i]) - \text{Var}(E[\phi(Z)|X_j])
    \end{equation}
when $X_S = \emptyset$. This suggests that the extent of efficiency gain depends on how much the joint conditioning on $(X_i,X_j)$ explains variation in the IF beyond what is captured by conditioning on $X_i$ and $X_j$ individually. A similar argument applies when projecting onto $\mathscr{P}_{ij.S}$.

Given a set of conditional/marginal independence constraints, we propose an iterative procedure that sequentially projects the IF onto the independence submodel. The procedure is described in the following algorithm, which we refer to as the \emph{Alternating Projection (AP) algorithm}.\footnote{This is closely related to other iterative projection algorithms in the semiparametric literature, such as the alternating conditional expectation algorithm of \citet{breiman1985estimating} and the ``backfitting" procedure of \citet{bickel1993efficient}.} 

\begin{algorithm}[H]
\caption{(Alternating Projection Algorithm) Projection of an influence function (IF) onto a submodel defined by a set of independence constraints.}\label{alg: proj IF}
\begin{algorithmic}[1]
\Require IF $\phi(Z)$, samples of the covariate $X=\{X_1,...,X_p\}$, set of $M \geq 2$ independence constraints $\{X_{i_m} \ind X_{j_m} \mid  X_{S_m}\}_{m=1}^M$, tolerance threshold $\epsilon$
\Ensure Efficient influence function $\phi^\text{eff}(Z)$
    \State Set $\phi^* = \phi$ 
    \Repeat
        \For{$m = 1,...,M$}
            \State $\phi^\text{eff}= \phi^* - \widehat{E}[\phi^*|X_{i_m},X_{j_m},X_{S_m}] + \widehat{E}[\phi^*|X_{i_m},X_{S_m}] + \widehat{E}[\phi^*|X_{j_m},X_{S_m}] - \widehat{E}[\phi^*|X_{S_m}]$
            \If{$m \neq M$}
                \State Set $\phi^* = \phi^\text{eff}$
            \EndIf
        \EndFor
        \State Compute $\delta = \text{mean}\{(\phi^\text{eff}-\phi^*)^2\}$
        \State Set $\phi^* = \phi^\text{eff}$
    \Until{$\delta \leq \epsilon$}
    \State \Return $\phi^\text{eff}$
\end{algorithmic}
\end{algorithm}

This procedure is iterative, in the sense that it may involve projecting onto a first submodel, then a second submodel, and then back onto the first again until a stopping criterion is met; this is necessary when the tangent spaces associated with the independence constraints are not mutually orthogonal \citep{bickel1993efficient}. When the conditional independence relations among baseline covariates correspond to a DAG factorization, the tangent spaces corresponding to different independence constraints are orthogonal. In this case, a single iteration of sequential projections suffices to obtain the EIF. 
However, more generally the conditional independence constraints among baseline covariates may not correspond to any DAG over that set of variables, for example, when the pattern of independence constraints is induced by a data generating process with latent common causes among some of the covariates (so, factorizes according to some ADMG). If that is the case, the associated tangent spaces for each submodel may not be orthogonal and the iteration may be necessary to ensure the projection is contained in the tangent space of the intersection. 

Theorem~\ref{thm: alg} establishes the correctness of Algorithm~\ref{alg: proj IF}; a proof is given in Appendix~\ref{appendix: pf alg}.

\begin{theorem}
\label{thm: alg}
Suppose that the tangent space corresponding to each imposed independence constraint is a closed subspace of a Hilbert space. Then, as the tolerance threshold $\epsilon \to 0$, Algorithm~\ref{alg: proj IF} returns an element that converges to the efficient influence function with respect to the submodel satisfying all imposed independence constraints. 
\end{theorem}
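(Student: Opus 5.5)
The plan is to read Algorithm~\ref{alg: proj IF} as von Neumann's alternating projection method and then invoke the von Neumann--Halperin theorem. For each constraint $m=1,\dots,M$, write $\mathscr{T}_m := \mathscr{T}_{i_mj_m.S_m}$ for the tangent space of the submodel $\mathscr{P}_m$ defined by $X_{i_m}\ind X_{j_m}\mid X_{S_m}$. By assumption each $\mathscr{T}_m$ is a closed subspace of $\mathcal{H}$, so the orthogonal projection $P_m := \Pi[\,\cdot\mid\mathscr{T}_m]$ is well defined. Theorem~\ref{thm: proj IF} gives this projection in closed form:
\[
P_m h = h - E[h\mid X_{i_m},X_{j_m},X_{S_m}] + E[h\mid X_{i_m},X_{S_m}] + E[h\mid X_{j_m},X_{S_m}] - E[h\mid X_{S_m}].
\]
Strictly, Theorem~\ref{thm: proj IF} is stated for influence functions, but its proof only uses the characterization of $\mathscr{T}_{ij.S}^\perp$, so the formula holds for any $h\in\mathcal{H}$. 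The inner loop of the algorithm (with population conditional expectations in place of $\widehat E$) therefore applies the product operator $Q := P_M P_{M-1}\cdots P_1$. After $k$ passes through the outer loop, the output is $Q^k\phi$.

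The first step is to identify the target. The submodel satisfying all constraints is $\mathscr{P}_\cap=\bigcap_m \mathscr{P}_m$. I will argue that its tangent space is $\mathscr{T}_\cap=\bigcap_m \mathscr{T}_m$, which is closed as an intersection of closed subspaces. One inclusion, $\mathscr{T}_\cap\subseteq\bigcap_m\mathscr{T}_m$, is immediate: any parametric submodel of $\mathscr{P}_\cap$ is a parametric submodel of each $\mathscr{P}_m$. The reverse inclusion is the delicate point. One must show that every direction lying in all $\mathscr{T}_m$ can be approximated by scores of submodels that respect all constraints simultaneously. I would handle this by working in the nonparametric (locally saturated) setting: for a bounded $s\in\bigcap_m\mathscr{T}_m$, consider the path $p_t=p(1+ts)$. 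I expect this to be the main obstacle, because $p_t$ need not satisfy every independence exactly beyond first order. If the direct construction fails, the fallback is to \emph{define} the efficient influence function for the intersection model as $\Pi[\phi\mid\bigcap_m\mathscr{T}_m]$. This is how the statement is implicitly framed ("converges to the EIF with respect to the submodel"), and it is justified via Theorem~\ref{thm: EIF} once the tangent space is identified.

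Given that identification, Theorem~\ref{thm: EIF} yields $\phi^{\text{eff}}_\cap=\Pi[\phi\mid\bigcap_m\mathscr{T}_m]$. The von Neumann--Halperin theorem states that for finitely many closed subspaces of a Hilbert space, $\|Q^k h - \Pi[h\mid\bigcap_m\mathscr{T}_m]\|\to 0$ for every $h$. Applying it with $h=\phi$ gives the result.

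It remains to connect the stopping rule to convergence. The quantity $\delta$ is the squared norm of the change produced by one outer pass, $\|Q^{k+1}\phi-Q^k\phi\|^2$. This tends to $0$ because the iterates converge. Conversely, as $\epsilon\to0$ the stopping index $k(\epsilon)\to\infty$ unless the iterates reach the fixed point exactly. Indeed, if $Qh=h$ then $\|h\|=\|P_M\cdots P_1h\|$, which forces $P_mh=h$ for all $m$, so $h\in\bigcap_m\mathscr{T}_m$. In either case the returned element converges to $\phi^{\text{eff}}_\cap$. Finally, I would remark that when the $\mathscr{T}_m$ are mutually orthogonal complements in the DAG case, the projections commute and $Q$ is idempotent, so one pass suffices. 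This is consistent with the discussion preceding the theorem.
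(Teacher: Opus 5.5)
Your proposal follows the same route as the paper's proof in Appendix~\ref{appendix: pf alg}. There, too, the explicit projections $P_m$ come from Theorem~\ref{thm: proj IF}, and $P$ is the projection onto $\mathscr{T}_1\cap\cdots\cap\mathscr{T}_r$. The alternating-projection theorems of Bickel et al.\ (1993, Section A.4, Theorems 1 and 3), i.e.\ von Neumann--Halperin, then give $\|(P_1\cdots P_r)^m\phi-P\phi\|\to 0$. The paper identifies $\Pi[\phi\mid\bigcap_m\mathscr{T}_m]$ with the efficient influence function without comment. So the tangent-space identification you flag as delicate is simply assumed there, and your fallback matches the paper's implicit convention.

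The one step to correct is your treatment of the stopping rule, which the paper does not attempt; it just equates small $\epsilon$ with many passes $m$. In Algorithm~\ref{alg: proj IF}, $\phi^*$ is overwritten inside the for-loop for every $m\neq M$. So when $\delta$ is computed, $\phi^*=\psi:=P_{M-1}\cdots P_1Q^k\phi$ and $\delta=\|P_M\psi-\psi\|^2$. This is the change due to the last projection only, not $\|Q^{k+1}\phi-Q^k\phi\|^2$. Hence $\delta=0$ yields only $\psi\in\mathscr{T}_M$, and your fixed-point argument ($Qh=h\Rightarrow h\in\bigcap_m\mathscr{T}_m$) does not apply.

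For $M=2$ the conclusion survives. With $h=Q^k\phi$, the facts $P_1h\in\mathscr{T}_1\cap\mathscr{T}_2$ and $h-P_1h\perp\mathscr{T}_1$ force $P_1h=Ph=P\phi$. For $M\ge 3$ it can fail. If, say, the last two listed constraints coincide, then $\delta=0$ already after the first pass for every $\phi$, and the returned $P_{M-1}\cdots P_1\phi$ need not lie in $\mathscr{T}_1$.

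So your argument proves the theorem for a stopping rule that compares iterates across full passes, which is what you implicitly assumed. For the rule as literally written, you need an extra assumption excluding such premature stops. Termination itself is not an issue either way, since $\delta\to 0$ along the iterates.
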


We propose an estimation procedure for the conditional expectations $E[\phi^*|\cdot]$ that respects the independence constraints; full details are provided in Appendix~\ref{appendix: caleb}. To estimate the conditional expectations in line 4 of Algorithm~\ref{alg: proj IF}, we first fit a flexible model (e.g., using super learner \citep{van2007super}) for $E[\phi^*|X_{i_m},X_{j_m},X_{S_m}]$. Under the constraint $X_i \ind X_j \mid X_S$, we have 
\begin{equation*}
    E[\phi^*|X_{i_m},X_{S_m}] = E[E\{\phi^*|X_{i_m},X_{j_m},X_{S_m}\} | X_{i_m},X_{S_m}],
\end{equation*}
where the outer expectation on the right-hand side is taken with respect to the conditional distribution of $X_j$ given $X_S$. Therefore, an estimate of $E[\phi^*|X_{i_m},X_{S_m}]$ can be obtained by marginalizing the fitted model $\widehat{E}[\phi^*|X_{i_m},X_{j_m},X_{S_m}]$ over the empirical distribution of $X_{j_m}$ given $X_{S_m}$. Estimating each conditional expectation in a single projection procedure separately may fail to respect the imposed independence constraints, so that the projected IF is not semiparametric efficient. Our proposed estimation approach ensures that model constraints are obeyed in the estimation. 

\section{Sensitivity analysis}\label{sec: sens analysis}
Our goal is to estimate the ACE, $\tau=E[Y(1)-Y(0)]$, using a sensitivity analysis framework that accommodates unmeasured confounding, while incorporating the independence constraints among observed baseline covariates. In this section, we review methods largely based on the work of \citet{robins2000sensitivity,nabi2024semiparametric}, and \citet{zhou2023sensitivity}, among others, and discuss how the AP algorithm can be applied within this framework.

\subsection{Sensitivity analysis model}\label{subsec: sens model}
We adopt the sensitivity analysis model of \citet{robins2000sensitivity} recently studied by \citet{nabi2024semiparametric}. For a binary treatment ($t \in \{0,1\}$), the sensitivity model takes the form:
    \begin{equation}\label{eq: sens model general}
        p(y(t)|1-t,x) = p(y(t)|t,x)q_t\{y,x;\gamma_t,h_t(p(\cdot|t,x);\gamma_t)\},
    \end{equation}
where $q_t\{\cdot;\gamma_t\}$ is a specified non-negative function satisfying: (1) $q_t\{\cdot;\gamma_t\}=1$ only when $\gamma_t=0$; (2) $p(y(t)|t,x)q_t\{\cdot;\gamma_t\}$ integrates to 1 for all $x$ and $\gamma_t$ \citep{nabi2024semiparametric}. The sensitivity parameter, $\gamma_t$, quantifies the deviation from the  ``no unmeasured confounding" setting ($\gamma_t=0$) in which $p(y(t)|1-t,x) = p(y(t)|t,x)$. The functional $h_t(\cdot;\gamma_t)$ is a normalization constant chosen to ensure that the right-hand side of eq.~\ref{eq: sens model general} integrates to 1, so that $p(y(t)|1-t,x)$ is a valid probability density. 

The exponential tilt representation of the above model links the two counterfactual distributions in the following way \citep{scharfstein1999adjusting,robins2000sensitivity}: 
    \begin{equation*}
        q_t\{y,x;\gamma_t,h_t(p(\cdot|t,x);\gamma_t)\} = \frac{\exp\{s_t(y,x;\gamma_t)\}}{E[\exp\{s_t(Y,x;\gamma_t)\}|T=t,X=x]},
    \end{equation*}
where $s_t(y,x;\gamma_t)$ is a bounded function satisfying $s_t(y,x;0) = 0$. Here, $h_t(\cdot;\gamma_t)$ equals the denominator, $E[\exp\{s_t(Y,x;\gamma_t)\}|T=t,X=x]$ \citep{nabi2024semiparametric}. Assuming $s_t(y,x;\gamma_t) = \gamma_ty$ and $\gamma_1 = \gamma_0 = \gamma$, the model simplifies to \citep{zhou2023sensitivity}:
    \begin{equation}\label{eq: sens model simple}
        p(y(t)|1-t,x) = p(y(t)|t,x) \frac{\exp(\gamma y)}{E[\exp(\gamma Y)|T=t,X=x]}.
    \end{equation} 

We focus on the simplified model above due to its interpretability and applicability in many practical settings. For a binary outcome $Y$, \citet{zhou2023sensitivity} show that if $[Y(1)|T=1,X=x]$ follows a Bernoulli distribution, then $\gamma$ represents the conditional (given $X$) log odds ratio between the counterfactual and factual potential outcomes, i.e., 
    \begin{equation*}
        \text{logit}\{P(Y(1)=1|T=0,X=x)\} - \text{logit}\{P(Y(1)=1|T=1,X=x)\} = \gamma.
    \end{equation*}

For a given $\gamma$, we can identify $E[Y(t)]$ by 
    \begin{equation*}
        \psi_t(p_Z;\gamma) = \int_x \Bigg\{E[Y|T=t,X=x]\pi_t(x) + \frac{E[Ye^{\gamma Y}|T=t,X=x]}{E[e^{\gamma Y}|T=t,X=x]}\pi_{1-t}(x) \Bigg\}p(x)dx
    \end{equation*}
under model~(\ref{eq: sens model simple}), and the ACE is identified by $\psi_1(p_Z;\gamma) - \psi_0(p_Z;\gamma)$ \citep{nabi2024semiparametric}.

\subsection{Semiparametric inference}\label{subsec: semiparam}
Let $\phi_t(p_Z;\gamma)(Z)$ denote the nonparametric EIF for $E[Y(t)]$ given $\gamma$. Under model~(\ref{eq: sens model simple}), $\phi_t(p_Z;\gamma)(Z)$ becomes 
    \begin{align}\label{eq: IF}
        \phi_t(p_Z;\gamma)(Z) = &I(T=t)Y  \nonumber \\ 
        &+ I(T=t)\frac{\pi_{1-t}(X)}{\pi_t(X)}\frac{e^{\gamma Y}}{E[e^{\gamma Y}|X,T=t]}\bigg\{Y-\frac{E[Ye^{\gamma Y}|X,T=t]}{E[e^{\gamma Y}|X,T=t]}\bigg\}\\
        &+ I(T=1-t)\frac{E[Ye^{\gamma Y}|X,T=t]}{E[e^{\gamma Y}|X,T=t]} - \psi_t(p_Z;\gamma).  \nonumber
    \end{align}
(For derivations of the EIF under similar sensitivity analysis models, see \citet{scharfstein1999adjusting,robins2000sensitivity,rotnitzky2021characterization,nabi2024semiparametric}.)
For simplicity, we focus on a binary outcome $Y$.\footnote{For a continuous outcome, one posits a model for $p(y|t,x)$ and uses the fitted model to estimate the conditional expectations. See \citet{nabi2024semiparametric} for an example.} Then,
    \begin{align*}
        E[e^{\gamma Y}|X,T=t] &= e^\gamma P(Y=1|X,T=t) + P(Y=0|X,T=t),\\
        E[Ye^{\gamma Y}|X,T=t] &= e^\gamma P(Y=1|X,T=t).
    \end{align*}
A super learner ensemble can be used to estimate the nuisance functions, $P(Y=1|X=x,T=t)$ and $\pi_t(X)$ for $t \in \{0,1\}$ \citep{zhou2023sensitivity}. 

We use the IF to estimate our target parameter $\psi_t(\gamma) = E[Y(t)]$ with the one-step estimator \citep{bickel1993efficient,kennedy2024semiparametric}:
    \begin{equation}\label{eq: one-step estimator}
        \widehat{\psi}_t(\gamma) = \psi_t(\hat{p}_Z;\gamma) + \frac{1}{n}\sum_i\phi_t(\hat{p}_Z;\gamma)(Z_i).
    \end{equation}
We apply cross-fitting by splitting the sample into $K$ folds, and denote $\widehat{\psi}_t^{(k)}(\gamma)$ as the $k$-th split one-step estimator:
    \begin{equation}\label{eq: k-th split estimator}
        \widehat{\psi}_t^{(k)}(\gamma) = \psi_t(\hat{p}_Z^{(-k)};\gamma) + \frac{1}{n_k}\sum_{i:S_i=k}\phi_t(\hat{p}_Z^{(-k)};\gamma)(Z_i)
    \end{equation}
where $\hat{p}_Z^{(-k)}$ is the observed data distribution estimated on all observations except the $k$-th fold.\footnote{\citet{nabi2024semiparametric} further truncate the $k$-th split estimator to improve numerical stability in the presence of extreme treatment probabilities. Since the estimator itself is not the focus of our work, we do not follow their truncation approach. However, we truncate the estimated treatment probabilities to be between 0.01 and 0.99 in our simulations to avoid some numerical instability.} Cross-fitting enables $\sqrt{n}$-rate convergence under relatively weak regularity conditions, and has a long history in semiparametric inference (e.g., \citet{hasminskii1979nonparametric,schick1986asymptotically,bickel1988estimating}). More recently, \citet{chernozhukov2018double} established the use of machine learning methods for nuisance parameter estimation.

We combine estimates across the $K$ split samples by 
    \begin{equation}\label{eq: one-step split sample estimator}
        \widehat{\psi}_t(\gamma) = \text{median}\big\{\widehat{\psi}_t^{(k)}(\gamma)\big\}_{k=1}^K.
    \end{equation}
The median is used instead of the mean to reduce sensitivity to folds yielding extreme estimates. The variance estimator is 
    \begin{equation}\label{eq: one-step split sample var}
        \widehat{\sigma}^2_t(\gamma) = \text{median}\{\widehat{\sigma}^{2^{(k)}}_t(\gamma)\}_{k=1}^K,
    \end{equation}
where $\widehat{\sigma}^{2^{(k)}}_t$ is the estimated variance for $\widehat{\psi}_t^{(k)}(\gamma)$. 

To obtain an efficient estimator for $E[Y(t)]$, we propose to project the IF with respect to the nonparametric model onto the submodel respecting known independence constraints among $X$. Specifically, $\phi_t(\hat{p}_Z^{(-k)};\gamma)(Z_i)$ in the $k$-th split estimator (eq.~\ref{eq: k-th split estimator}) is replaced by its projected version $\phi_t^{\text{eff}}(\hat{p}_Z^{(-k)};\gamma)(Z_i)$, obtained by applying  Algorithm~\ref{alg: proj IF} to $\phi_t(\hat{p}_Z^{(-k)};\gamma)(Z)$.

\begin{Remark}[Bounding the bias of unobserved confounding]
Our illustrative example focuses on binary treatments, and the simulations and data applications rely on the particular sensitivity model described above. \citet{chernozhukov2022long} offer an alternative framework that uses a different set of sensitivity parameters and their approach may also be used when the exposure is either binary or continuous, with different choice of target parameters. They establish explicit bounds on the unobserved confounding bias (referred to as ``omitted variable bias" in their paper) by only making assumptions about the extent to which unobserved variables can explain residual variation in both the treatment and the outcome. They also provide semiparametric inference on these bounds using debiased machine learning. Their targets of inference targets differ: the approach outlined above produces a point estimate of the ACE for a given sensitivity parameter value, whereas \citet{chernozhukov2022long} estimate upper and lower bounds on the ACE for each choice of  sensitivity parameters. We describe how to apply the IF projection schema proposed here using \citet{chernozhukov2022long} in Appendix~\ref{appendix: ovb}, along with a simulation illustrating how this may improve the efficiency of the upper and lower bound estimates obtained with that approach. Combining our general proposal with alternative sensitivity models in the literature would proceed similarly, as long as the fixed finite-dimensional target effect parameters are defined for choices of sensitivity values and the identifying functionals can be estimated via their corresponding influence functions.
\end{Remark}
 
\section{Simulation}\label{sec: sim}

\subsection{Example 1}\label{sec: sim DAG}
Through simulations, we demonstrate the iterative projection of the IF for $E[Y(t)]$ under sensitivity model~(\ref{eq: sens model simple}). Consider the following data generating process (DGP) with four covariates $X_1,..,X_4$ and a latent covariate $U$:
    \begin{align*}
        X_1 &\sim N(0,1), \\
        X_2 &\sim \text{Bernoulli}(0.5), \\
        X_3 &= -0.5 + X_2 + \varepsilon_3,\quad \varepsilon_3 \sim N(0,1), \\
        X_4 &= 1.5 X_1X_2 + \varepsilon_4,\quad \varepsilon_4 \sim N(0,1), \\
        U &\sim N(0,1).
    \end{align*}
A binary treatment and the binary potential outcomes are generated from the logistic models:
    \begin{align*}
        \text{logit}\{P(T=1|X,U)\} &= -0.2 + 0.5X_1 + 0.6X_2 - 0.3X_3 + 0.2X_4 + 0.7U, \\
        \text{logit}\{P(Y(0)=1|X,U)\} &= -0.4 + 0.4X_1 - 0.3X_2 + 0.25X_3 + 0.2X_4 + 0.7U, \\
        \text{logit}\{P(Y(1)=1|X,U)\} &= \text{logit}\{P(Y(0)=1|X,U)\} + 0.8.
    \end{align*}

We select $\gamma \in \{-2, -1, 0, 1, 2\}$, and run 500 simulations with sample size of $n = 500$. We estimate $E[Y(1)]$ and $E[Y(0)]$ with the one-step split sample ($K=5$ folds) estimator in eq.~\ref{eq: one-step split sample estimator} and fit the nuisance functions using a super learner ensemble that includes GLM, GAM, and random forest. The DGP imposes the following independence constraints on the observed covariates: $X_1 \ind X_2$, $X_1 \ind X_3$ and $X_3 \ind X_4 \mid X_2$. These constraints correspond to a DAG factorization. In theory, this structure allows one to obtain the projection onto the intersection of the corresponding tangent spaces by sequentially projecting the IF onto each subspace once, without the need for repeated iterations. However, in finite samples, estimation error may prevent exact recovery of the target projection, so we suggest iterating the projection procedure as described in Algorithm~\ref{alg: proj IF} to obtain an approximation that is closer to the efficient influence function.

Table~\ref{table: sim1} compares the ACE and variance estimates computed from the nonparametric and projected IFs, averaged over 500 simulations; the projected IF is the EIF with respect to the submodel under the imposed independence constraints. We also report bias, $|\hat{\tau} - \tau^\text{oracle}_\gamma|$, for each $\gamma$. Across all values of $\gamma$, the projected estimator $\hat{\tau}^{\mathrm{Proj}}$ achieves lower variance than the nonparametric estimator $\hat{\tau}^{\mathrm{NP}}$, with negligible difference in bias between the two estimators.

\begin{longtable}{c ccc c}
  \caption{Average causal effect and variance (in parentheses) estimates using the nonparametric and projected influence function (IF) ($\hat{\tau}^{\mathrm{NP}}$ and $\hat{\tau}^{\mathrm{Proj}}$), averaged over 500 Monte Carlo replications for different values of the sensitivity parameter $\gamma$. The projected IF is obtained by Algorithm~\ref{alg: proj IF} (tolerance $\epsilon = 4 \times 10^{-4}$). Bias is $|\hat{\tau} - \tau_{\gamma}^{\mathrm{oracle}}|$.} \\
  \label{table: sim1} \\
  \toprule
  $\gamma$ & $\hat{\tau}^{\mathrm{NP}}$ ($\widehat{\text{var}}$) & $\hat{\tau}^{\mathrm{Proj}}$ ($\widehat{\text{var}}$) & Bias$^{\mathrm{NP}}$ & Bias$^{\mathrm{Proj}}$ \\
  \midrule
  \endfirsthead
  \toprule
  $\gamma$ & $\hat{\tau}^{\mathrm{NP}}$ ($\widehat{\text{var}}$) & $\hat{\tau}^{\mathrm{Proj}}$ ($\widehat{\text{var}}$) & Bias$^{\mathrm{NP}}$ & Bias$^{\mathrm{Proj}}$ \\
  \midrule
  \endhead
  \bottomrule
  \endlastfoot
  $-2$ & 0.2081 (0.0066) & 0.2092 (0.0056) & 0.0299 & 0.0304 \\
  $-1$ & 0.2405 (0.0090) & 0.2413 (0.0073) & 0.0370 & 0.0381 \\
  $0$ & 0.2567 (0.0109) & 0.2568 (0.0087) & 0.0434 & 0.0464 \\
  $1$ & 0.2325 (0.0103) & 0.2307 (0.0083) & 0.0434 & 0.0461 \\
  $2$ & 0.1891 (0.0081) & 0.1866 (0.0068) & 0.0379 & 0.0407 \\
\end{longtable}

Additionally, we assess the impact of assuming the wrong independence structure, i.e., projecting the IF onto a ``misspecified" submodel. The simulation setup and results are presented in Appendix~\ref{appendix: sim}. An incorrect projection may result in biased estimates. For example, suppose $X_1 \notind X_3$, but $\phi_t$ is projected onto a submodel that incorrectly assumes $X_1 \ind X_3$, using the projection:
    \begin{equation*}
        \phi^\prime = \phi_t - \{E[\phi_t|X_1,X_3] - E[\phi_t|X_1] - E[\phi_t|X_3] + E[\phi_t]\}.
    \end{equation*}
Because $\phi^\prime$ is a projection onto a misspecified model, it is no longer an IF, and an estimator constructed from $\phi^\prime$ will likely be biased.

\subsection{Example 2}
We now consider a different DGP:
    \begin{align*}
        U &\sim N(0,1), \\
        X_1 &\sim N(0,1), \\
        X_2 &= 0.6X_1 + 0.8U + \epsilon_2 \quad \epsilon_2 \sim N(0,1), \\
        X_3 &\sim N(0,1), \\
        X_4 &= 0.7X_3 + 0.9U + \epsilon_4 \quad \epsilon_4 \sim N(0,1),
    \end{align*}
where $X_2$ and $X_4$ share a latent common cause $U$. A binary treatment and binary potential outcomes are generated from the logistic models:
    \begin{align*}
        \text{logit}\{P(T=1|X)\} &= -0.1 + 0.35X_1 + 0.30X_2 + 0.35X_3 + 0.3X_4 + 0.45U, \\
        \text{logit}\{P(Y(0)=1|X,U)\} &= -0.9 + 0.30X_1 + 0.25X_2 + 0.30X_3 + 0.25X_4 + 0.45U, \\
        \text{logit}\{P(Y(1)=1|X,U)\} &= \text{logit}\{P(Y(0)=1|X,U)\} + 1.05.
    \end{align*}

We conduct 500 simulation runs for each value of $\gamma \in \{-2,-1,0,1,2\}$ with sample sizes of $n=500$. Following the previous simulation, we estimate $E[Y(1)]$ and $E[Y(0)]$ using the one-step split sample ($K=5$ folds) estimator and fit the nuisance functions via a super learner ensemble that includes GLM, GAM, and random forest. The DGP imposes the following independence constraints on the observed covariates: $X_1 \ind X_3$, $X_1 \ind X_4 \mid X_3$, $X_2 \ind X_3 \mid X_1$. The tangent spaces of the independence submodels are not mutually orthogonal in this scenario;\footnote{The density $p(x)$ is not faithful to one unique DAG factorization, but rather factorizes according to an ADMG.} therefore, unlike in the previous example, it is necessary to perform the sequential IF projections in an iterative manner.

Table~\ref{table: sim2} reports ACE estimates, their variances, and bias for each $\gamma$, averaged over 500 simulations. The results mirror those of Example 1: $\hat{\tau}^{\mathrm{Proj}}$ achieves lower variance than $\hat{\tau}^{\mathrm{NP}}$ across the selected values of $\gamma$, with comparable bias between the two.

\begin{longtable}{c ccc c}
  \caption{Average causal effect and variance (in parentheses) estimates using the nonparametric and projected influence function (IF) ($\hat{\tau}^{\mathrm{NP}}$ and $\hat{\tau}^{\mathrm{Proj}}$), averaged over 500 Monte Carlo replications for different values of the sensitivity parameter $\gamma$. The projected IF is obtained by Algorithm~\ref{alg: proj IF} (tolerance $\epsilon = 4 \times 10^{-4}$). Bias is $|\hat{\tau} - \tau_{\gamma}^{\mathrm{oracle}}|$.} \\
  \label{table: sim2} \\
  \toprule
  $\gamma$ & $\hat{\tau}^{\mathrm{NP}}$ ($\widehat{\text{var}}$) & $\hat{\tau}^{\mathrm{Proj}}$ ($\widehat{\text{var}}$) & Bias$^{\mathrm{NP}}$ & Bias$^{\mathrm{Proj}}$ \\
  \midrule
  \endfirsthead
  \toprule
  $\gamma$ & $\hat{\tau}^{\mathrm{NP}}$ ($\widehat{\text{var}}$) & $\hat{\tau}^{\mathrm{Proj}}$ ($\widehat{\text{var}}$) & Bias$^{\mathrm{NP}}$ & Bias$^{\mathrm{Proj}}$ \\
  \midrule
  \endhead
  \bottomrule
  \endlastfoot
  $-2$ & 0.1993 (0.0064) & 0.1992 (0.0046) & 0.0340 & 0.0354 \\
  $-1$ & 0.2080 (0.0093) & 0.2077 (0.0065) & 0.0434 & 0.0459 \\
  $0$ & 0.2179 (0.0121) & 0.2166 (0.0082) & 0.0503 & 0.0524 \\
  $1$ & 0.2223 (0.0121) & 0.2199 (0.0081) & 0.0488 & 0.0495 \\
  $2$ & 0.2200 (0.0094) & 0.2177 (0.0064) & 0.0436 & 0.0443 \\
\end{longtable}

\section{Data examples}\label{sec: data}

\subsection{Effect of low ejection fraction on heart failure death}\label{subsec: heart}
We evaluate whether low ejection fraction (defined as EF $\leq 30\%$ in \citet{zhou2023sensitivity}) increased the rate of heart failure-related mortality. We use the dataset provided by \citet{ahmad2017survival}, which contains medical records of 299 heart failure patients aged 40 and above in Pakistan. In addition to ejection fraction (EF), ten other baseline covariates considered as potential factors explaining cardiovascular heart disease mortality were recorded for each patient \citep{ahmad2017survival}: age (years), anemia (yes/no), creatinine phosphokinase (log mcg/L), diabetes (yes/no), high blood pressure (yes/no), platelets count (platelets/mL), serum creatinine (mg/dL), serum sodium (mEa/L), gender, and smoking status (yes/no). The exposure variable $T$ is set to 1 if EF $\leq 30\%$ (93 patients) and 0 otherwise (206 patients). The outcome variable $Y$ is coded as 1 if the patient died from heart failure by the end of follow‑up in the \citet{ahmad2017survival} study and 0 otherwise. 

We apply the sensitivity model, with $\gamma \in \{-4, -2, 0, 2, 4\}$ (a range of parameters similar to what is considered by \citet{zhou2023sensitivity}), and the one-step split sample estimator (eq.~\ref{eq: one-step split sample estimator}) discussed in Section~\ref{sec: sens analysis}. We focus on assessing the efficiency gains achieved by projecting the IFs for $E[Y(1)]$ and $E[Y(0)]$ onto a submodel that respects the independence structure of the data. To identify this structure, we apply the tiered PC algorithm \citep{andrews2021practical,spirtes2000causation}, which learns a causal graph through a sequence of conditional independence tests while respecting a user‑specified temporal order (``tiers").\footnote{We apply the PC algorithm (rather than the FCI) to the pre-exposure covariates here, taking for granted possible unmeasured confounding between the exposure and outcome, because we are only interested in independencies among the pre-exposure covariates.} Specifically, we use a mixed variable conditional independence test at significance level $\alpha=0.05$ \citep{andrews2018scoring}, and assign age and gender to precede all other covariates temporally. 

We consider 29 marginal independencies (Table~\ref{table: heart indep}) found among the ten aforementioned covariates. 
Nuisance functions are fitted using a super learner ensemble (GLM, GAM, random forest) with cross‑fitting ($K=5$). We apply Algorithm~\ref{alg: proj IF} to project the IFs for $E[Y(1)]$ and $E[Y(0)]$.

Figure~\ref{fig: heart} shows that projecting the IF onto an independence submodel narrows the $95\%$ confidence interval for the ACE across all examined values of $\gamma$. 
In addition, we conducted a Hausman specification test \citep{hausman1978} to assess whether the nonparametric estimator $\hat{\tau}^{\mathrm{NP}}$ and the ``projected" estimator $\hat{\tau}^{\mathrm{Proj}}$ are both consistent, in which case $\hat{\tau}^{\mathrm{Proj}}$ is asymptotically more efficient. This check is especially important here because the independence constrains underlying $\hat{\tau}^{\mathrm{Proj}}$ are learned from the same data that is subsequently used for causal estimation, so $\hat{\tau}^{\mathrm{Proj}}$ is potentially
susceptible to bias from misspecified constraints. The test statistic
\begin{equation*}
    H \;=\; \frac{\bigl(\hat{\tau}^{\mathrm{NP}} - \hat{\tau}^{\mathrm{Proj}}\bigr)^{2}}{\widehat{\mathrm{var}}\bigl(\hat{\tau}^{\mathrm{NP}}\bigr) - \widehat{\mathrm{var}}\bigl(\hat{\tau}^{\mathrm{Proj}}\bigr)}
\end{equation*}
is asymptotically distributed as $\chi^{2}_{1}$ under the null hypothesis that both estimators are consistent. The results for each value of $\gamma$, reported in Table~\ref{table: hausman heart}, provide no evidence against the null at conventional significance levels, supporting the use of $\hat{\tau}^{\mathrm{Proj}}$ as a
consistent and asymptotically more efficient estimator of $\tau$.

\begin{figure}[h]
    \centering
    \includegraphics[scale=.5]{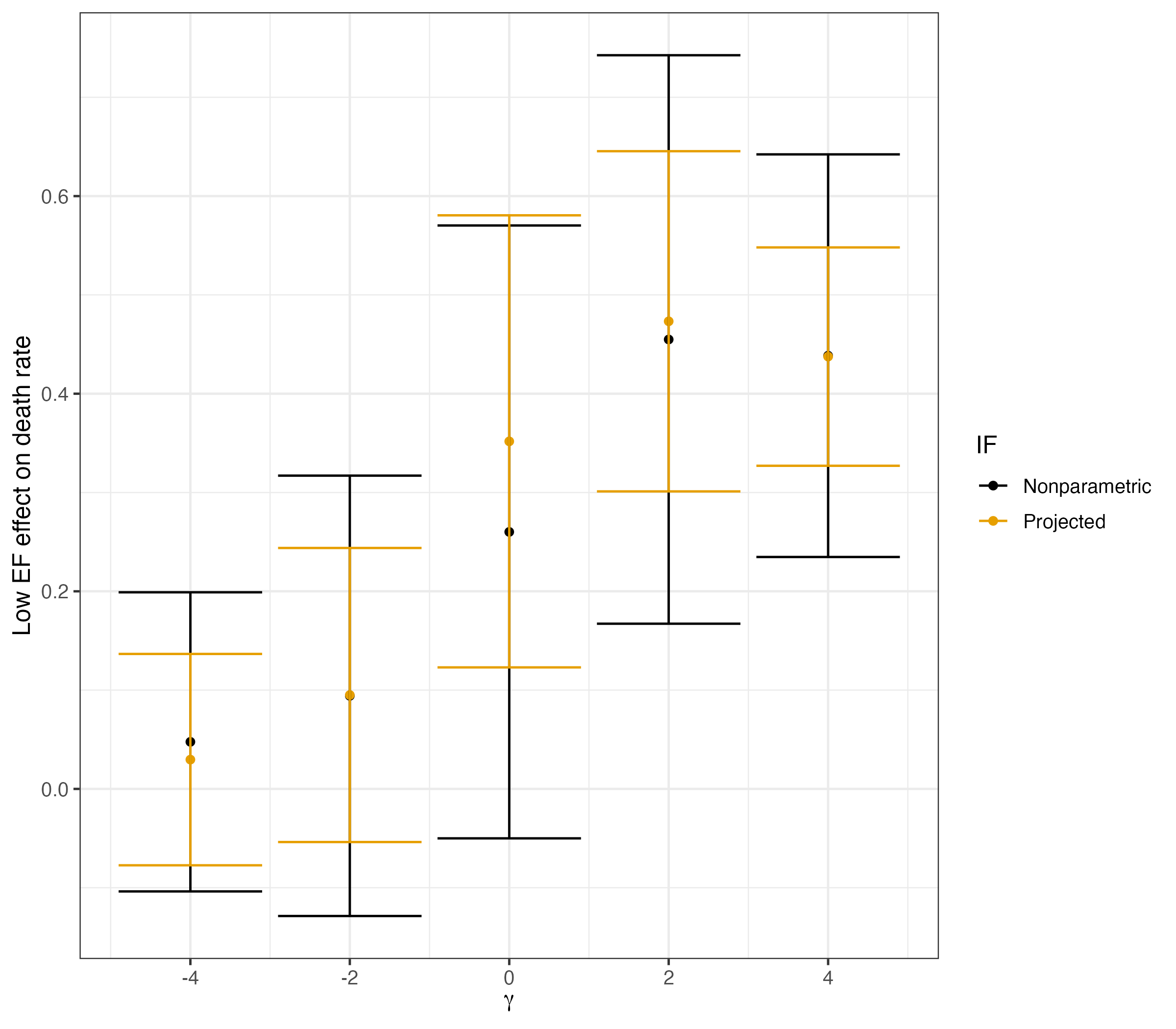}
    \caption{Estimated effect of low ejection fraction (EF) on heart failure death as a function of the sensitivity parameter $\gamma$, with $95\%$ confidence intervals, computed from the nonparametric and projected influence functions.}
    \label{fig: heart}
\end{figure}

\begin{longtable}{ccccc}
\caption{Hausman specification test comparing the nonparametric estimator $\hat{\tau}^{\mathrm{NP}}$ and the projected estimator $\hat{\tau}^{\mathrm{Proj}}$ across sensitivity parameters $\gamma$. Estimated effects of low ejection fraction on heart failure death are reported with their estimated variances in parentheses. Under the null hypothesis that both estimators are consistent, the test statistic $H$ is asymptotically $\chi^{2}_{1}$-distributed.}
\label{table: hausman heart} \\
\toprule
$\gamma$ & $\hat{\tau}^{\mathrm{NP}}~(\widehat{\mathrm{var}})$ & $\hat{\tau}^{\mathrm{Proj}}~(\widehat{\mathrm{var}})$ & $H$ & $p$-value \\
\midrule
\endfirsthead
\toprule
$\gamma$ & $\hat{\tau}^{\mathrm{NP}}~(\widehat{\mathrm{var}})$ & $\hat{\tau}^{\mathrm{Proj}}~(\widehat{\mathrm{var}})$ & $H$ & $p$-value \\
\midrule
\endhead
\bottomrule
\endlastfoot
-4 & 0.048~(0.006) & 0.030~(0.003) & 0.108 & 0.742 \\
-2 & 0.094~(0.013) & 0.095~(0.006) & 0.000 & 0.992 \\
0 & 0.260~(0.025) & 0.352~(0.014) & 0.736 & 0.391 \\
2 & 0.455~(0.022) & 0.473~(0.008) & 0.024 & 0.876 \\
4 & 0.438~(0.011) & 0.438~(0.003) & 0.000 & 0.992 \\
\end{longtable}

\subsection{Effect of labor training program on post-intervention earnings}\label{subsec: lalonde}
\citet{lalonde1986evaluating} evaluated the effect of the National Supported Work Demonstration (NSW), a labor training program implemented across the US in the mid-1970s, on post‑intervention earnings. \citet{dehejia1999causal} later extracted and analyzed a subset of 445 male subjects (185 in the treatment group and 260 in the control group) from Lalonde’s dataset. Following \citet{zhou2023sensitivity}, we use the dataset in \citet{dehejia1999causal} to evaluate whether participating in the NSW ($T=1$ if enrolled, 0 if not enrolled) led to more earnings in 1978 (post-intervention) than in 1975 (pre-intervention; $Y=1$ if yes, 0 if no). Pre-intervention covariates include: age (years), education (years), Black (yes/no), Hispanic (yes/no), marital status (married/unmarried), high school degree (yes/no), 1974 earnings, 1974 employment status (employed/unemployed), 1975 earnings, and 1975 employment status (employed/unemployed). 

After applying the tiered PC algorithm to the pre-intervention covariates -- with age and Black/Hispanic indicators specified to temporally precede all other covariates, and 1975 employment and earnings placed last -- and the mixed variable conditional independence test at $\alpha=0.05$, we consider 30 independencies (13 conditional and 17 marginal; Table~\ref{table: lalonde indep}) among the pre-intervention covariates (we exclude the Hispanic indicator, as fewer than $9\%$ of the subjects identified as Hispanic). We apply the sensitivity model discussed in Section~\ref{sec: sens analysis}, with $\gamma \in \{-2, -1, 0, 1, 2\}$. Nuisance functions for the one-step split sample estimator are fitted using a super learner ensemble (GLM, GAM, random forest) with cross-fitting ($K = 5$). The IFs for $E[Y(1)]$ and $E[Y(0)]$ are projected following Algorithm~\ref{alg: proj IF}.

As in the previous application, projecting the IF yields similar point estimates to the standard approach without projection, while narrowing the $95\%$ confidence intervals for the ACE across all examined values of $\gamma$ (Figure~\ref{fig: lalonde}). 

Figure~\ref{fig: lalonde} displays the estimated ACE bounds over the selected range of $\gamma$, where the projected estimator yields narrower bounds on the ACE than the nonparametric estimator. The Hausman specification test, reported in Table~\ref{table: hausman lalonde}, finds no evidence against the null hypothesis that both $\hat{\tau}^{\mathrm{NP}}$ are $\hat{\tau}^{\mathrm{Proj}}$ consistent.

\begin{figure}[ht]
    \centering
    \includegraphics[scale=.5]{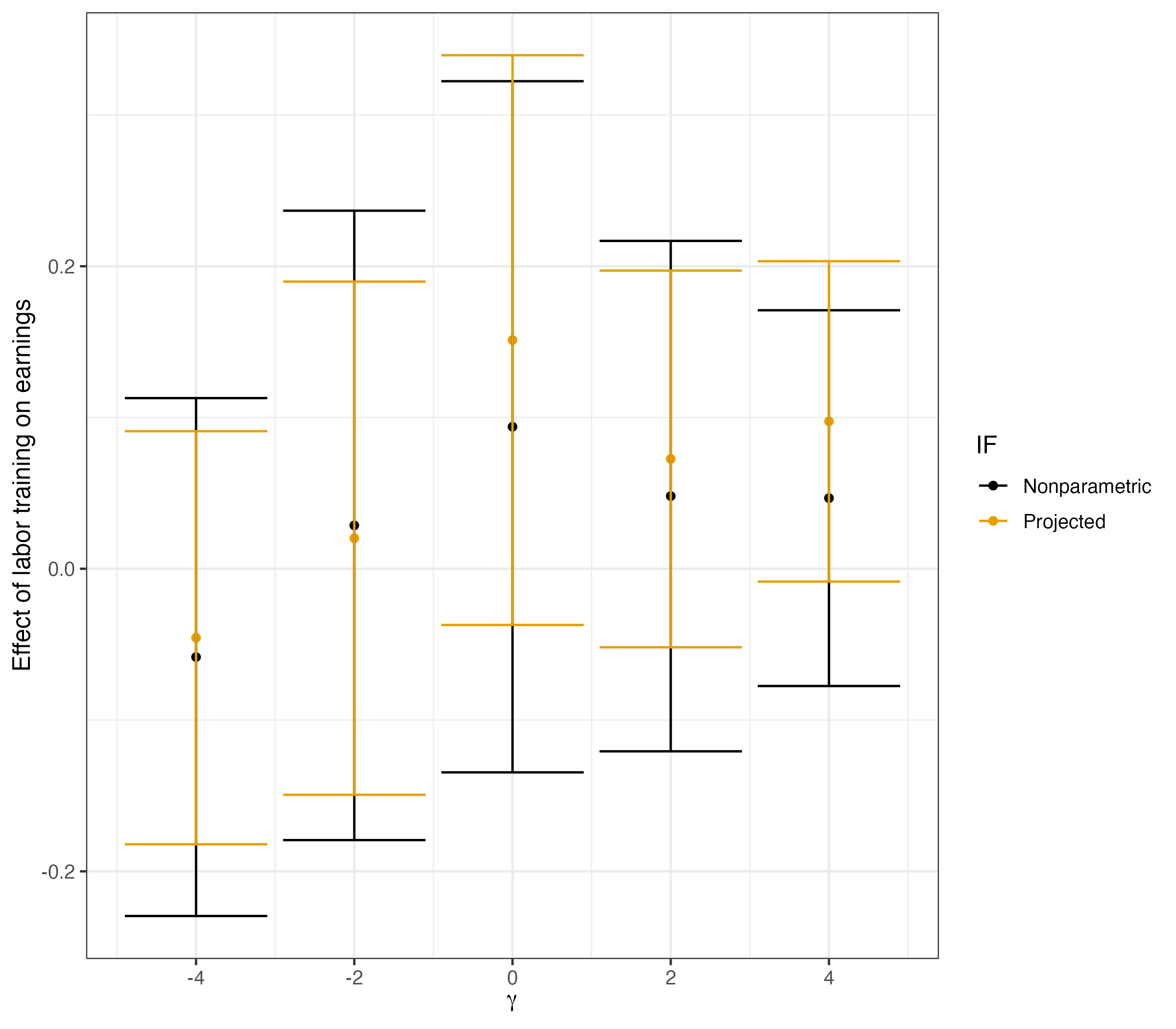}
    \caption{Estimated effect of labor training program on post-intervention earnings as a function of the sensitivity parameter $\gamma$, with $95\%$ confidence intervals, computed from the nonparametric and projected influence functions.}
    \label{fig: lalonde}
\end{figure}

\begin{longtable}{ccccc}
\caption{Hausman specification test comparing the nonparametric estimator $\hat{\tau}^{\mathrm{NP}}$ and the projected estimator $\hat{\tau}^{\mathrm{Proj}}$ across sensitivity parameters $\gamma$. Estimated effects of labor training program on post-intervention earnings are reported with their estimated variances in parentheses. Under the null hypothesis that both estimators are consistent, the test statistic $H$ is asymptotically $\chi^{2}_{1}$-distributed.}
\label{table: hausman lalonde} \\
\toprule
$\gamma$ & $\hat{\tau}^{\mathrm{NP}}~(\widehat{\mathrm{var}})$ & $\hat{\tau}^{\mathrm{Proj}}~(\widehat{\mathrm{var}})$ & $H$ & $p$-value \\
\midrule
\endfirsthead
\toprule
$\gamma$ & $\hat{\tau}^{\mathrm{NP}}~(\widehat{\mathrm{var}})$ & $\hat{\tau}^{\mathrm{Proj}}~(\widehat{\mathrm{var}})$ & $H$ & $p$-value \\
\midrule
\endhead
\bottomrule
\endlastfoot
-4 & -0.058~(0.008) & -0.046~(0.005) & 0.059 & 0.808 \\
-2 & 0.029~(0.011) & 0.020~(0.007) & 0.019 & 0.890 \\
0 & 0.094~(0.014) & 0.151~(0.009) & 0.754 & 0.385 \\
2 & 0.048~(0.007) & 0.073~(0.004) & 0.179 & 0.672 \\
4 & 0.047~(0.004) & 0.097~(0.003) & 2.347 & 0.126 \\
\end{longtable}

\section{Discussion}\label{sec: discuss}
Our work builds on existing semiparametric methods for sensitivity analysis. We constructed an efficient estimator of the ACE by projecting the IF onto the submodel determined by known independence constraints among baseline covariates. We focused particularly on the sensitivity analysis approach introduced by \citet{robins2000sensitivity} and briefly discussed the unmeasured confounding bias bounds developed by \citet{chernozhukov2022long}. However, our approach should extend naturally to other sensitivity analysis frameworks and to causal estimands beyond the ACE, although the implementation may require slight adaptations. 

While our work does not compare different adjustment sets, one may incorporate the optimal adjustment set (defined under the assumption of no latent variables) based on the criteria proposed by \citet{henckel2022graphical} and \citet{rotnitzky2020efficient} into the treatment and outcome models. However, whether these criteria -- developed for the point-identified setting -- yield efficiency gains in settings where the ACE is not point-identified remains an open question.

Several practical considerations are worth noting. First, the performance of IF projection depends on the assumed independence structure (i.e., the projected submodel) being correct. As illustrated by the simulation example in Appendix~\ref{appendix: sim}, projecting the IF onto a misspecified submodel may introduce bias. Therefore, a conservative approach would be to restrict attention to a small number of well-justified independencies. Second, one may prefer to consider only marginal independencies (or conditional independencies given a small number of covariates), since specifying the necessary nuisance models (i.e., expectations of the IF given covariates) with many covariates can be challenging. Relatedly, the efficiency gain from projecting the IF onto the correct submodel holds asymptotically, but finite sample performance may not improve if the noisiness of the nuisance estimation outweighs the efficiency gain. Third, in our data examples, we applied the PC algorithm to learn part of the causal structure, or at least identify conditional independence constraints to exploit. We use the same data for structure learning and subsequent inference, which raises the issue of post-selection inference. To address this, \citet{chang2026post} propose a method that enables valid inference following causal structure learning on the same dataset, which could be incorporated here to ensure valid inference. A combination of this approach to valid post-selection inference with the methodology proposed here constitutes a promising direction for future work.

\newpage
\bibliographystyle{plainnat}
\bibliography{references}

\newpage

\appendix

\section{Proof of Theorem~\ref{thm: proj IF}} \label{appendix: pf thm}
\begin{proof}
Let $\phi(Z)$ be any IF of an RAL estimator for $\beta$ in $\mathscr{P}_0$. Recall that the orthocomplement of the tangent space for $\mathscr{P}_{ij.S}$ is:
    \begin{align*}
        \mathscr{T}_{ij.S}^\perp 
        = \{&h(X_i,X_j,X_S)-E[h(X_i,X_j,X_S)|X_i,X_S]-E[h(X_i,X_j,X_S)|X_j,X_S]+E[h(X_i,X_j,X_S)|X_S]: \\
        &E[h] = 0,E[h^\intercal h]<\infty\}.
    \end{align*}
Let 
    \begin{equation*}
        g(X_i,X_j,X_S) := h(X_i,X_j,X_S) - E[h(X_i,X_j,X_S)|X_i,X_S] - E[h(X_i,X_j,X_S)|X_j,X_S] + E[h(X_i,X_j,X_S)|X_S]
    \end{equation*}
be an element of $\mathscr{T}_{ij.S}^\perp$. We want to find an $h^*(X_i,X_j,X_S)$ such that:
    \begin{equation*}
        E[f(Z)g(X_i,X_j,X_S)]=0 \quad \text{for any } g(X_i,X_j,X_S) \in \mathscr{T}_{ij.S}^\perp,
    \end{equation*}
where 
    \begin{equation*}
        f(Z) := \phi(Z)- h^* + E[h^*|X_i,X_S] + E[h^*|X_j,X_S] - E[h^*|X_S].
    \end{equation*}
Note that the ``residual" $f(Z) \in \mathscr{T}_{ij.S}$. Rewrite $\phi(Z)$ as
    \begin{align*}
        \phi(Z) = \phi(Z) &- E[\phi|X_i,X_j,X_S] + E[\phi|X_i,X_j,X_S] - E[\phi|X_i,X_S] - E[\phi|X_j,X_S] + E[\phi|X_S] \\
        &+ E[\phi|X_i,X_S] + E[\phi|X_j,X_S] - E[\phi|X_S],
    \end{align*}
and choose
    \begin{equation*}
        h^*(X_i,X_j,X_S) = E[\phi|X_i,X_j,X_S] - E[\phi|X_i,X_S] - E[\phi|X_j,X_S] + E[\phi|X_S].
    \end{equation*}
This choice leads to
    \begin{equation*}
       f(Z) = \phi(Z) - E[\phi|X_i,X_j,X_S] + E[\phi|X_i,X_S] + E[\phi|X_j,X_S] -E[\phi|X_S].
    \end{equation*}
It can be shown that
    \begin{equation*}
       E[f(Z)g(X_i,X_j,X_S)]=0
    \end{equation*}
because
    \begin{align*}
        &E[\{\phi(Z) - E[\phi(Z)|X_i,X_j,X_S]\}g(X_i,X_j,X_S)]=0,\\
        &E[E[\phi(Z)|X_i,X_S]g(X_i,X_j,X_S)]=0,\\
        &E[E[\phi(Z)|X_j,X_S]g(X_i,X_j,X_S)]=0,\\
        &E[E[\phi(Z)|X_S]g(X_i,X_j,X_S)]=0,
    \end{align*}
by iterated expectations and noting that $E[g|X_i,X_S]=E[g|X_j,X_S]=E[g|X_S]=0$. Thus, 
    \begin{equation*}
        \Pi[\phi|\mathscr{T}_{ij.S}^\perp] =  E[\phi|X_i,X_j,X_S] - E[\phi|X_i,X_S] - E[\phi|X_j,X_S] + E[\phi|X_S]
    \end{equation*}
and
    \begin{equation*}
        \phi^\text{eff} = \phi - E[\phi|X_i,X_j,X_S] + E[\phi|X_i,X_S] + E[\phi|X_j,X_S] - E[\phi|X_S]
    \end{equation*}
by Theorem~\ref{thm: EIF}.
\end{proof}

\section{Proof of Theorem~\ref{thm: alg}}
\label{appendix: pf alg}
\begin{proof}
First, suppose that there are two independence constraints: $X_i \ind X_j \mid X_S$ and $X_k \ind X_l \mid X_{S^\prime}$. Suppose that the tangent spaces for $\mathscr{P}_{ij.S}$ and $\mathscr{P}_{kl.S^\prime}$ are closed subspaces of a Hilbert space, denoted by $\mathscr{T}_{ij.S}$ and $\mathscr{T}_{kl.S^\prime}$, respectively. Let $P_1$ and $P_2$ be the projection operator onto $\mathscr{T}_{ij.S}$ and $\mathscr{T}_{kl.S^\prime}$, respectively, i.e.,
\begin{align*}
    P_1 &\equiv \Pi[\phi|\mathscr{T}_{ij.S}] = \phi - E[\phi|X_i,X_j,X_S] + E[\phi|X_i,X_S] + E[\phi|X_j,X_S] - E[\phi|X_S],\\
    P_2 &\equiv \Pi[\phi|\mathscr{T}_{kl.S^\prime}] = \phi - E[\phi|X_k,X_l,X_{S^\prime}] + E[\phi|X_k,X_{S^\prime}] + E[\phi|X_l,X_{S^\prime}] - E[\phi|X_{S^\prime}],
\end{align*}
by Theorem~\ref{thm: proj IF}.
Also, let $P \equiv \Pi[\cdot|\mathscr{T}_{ij.S} \cap \mathscr{T}_{kl.S^\prime}]$. By \citep{bickel1993efficient} (Theorem 1, Section A.4), for any $\phi \in \mathcal{H}$,
\begin{equation*}
    ||(P_1P_2)^m\phi - P\phi|| \to 0 \quad \text{as} \quad m \to \infty.
\end{equation*}

For $r \geq 2$ independence constraints and corresponding tangent spaces $\mathscr{T}_1,...,\mathscr{T}_r$, \citet{bickel1993efficient} (Theorem 3, Section A.4) implies that 
\begin{equation*}
    ||(P_1...P_r)^m\phi - P\phi|| \to 0 \quad \text{as} \quad m \to \infty,
\end{equation*}
where $P \equiv \Pi(\cdot|\mathscr{T}_1 \cap...\cap \mathscr{T}_r)$.
Thus, $\phi^\text{eff} = \Pi[\phi|\mathscr{T}_1 \cap...\cap \mathscr{T}_r] \approx (P_1...P_r)^m\phi$ with sufficiently large $m$.

\end{proof}

\section{Estimation of conditional expectations of the influence function}\label{appendix: caleb}
Suppose $X_i \ind X_j \mid X_S$, and consider projecting an influence function $\phi$ onto the submodel satisfying this conditional independence constraint. By Theorem~\ref{thm: proj IF}, the projected (efficient) influence function is given by
\begin{equation*}
    \phi^\text{eff} = \phi - E[\phi|X_i,X_j,X_S] + E[\phi|X_i,X_S] + E[\phi|X_j,X_S] - E[\phi|X_S].
\end{equation*}
We first estimate $E[\phi|X_i,X_S]$ using a super learner ensemble (including GLM, GAM, and random forests), and denote this fitted model by $\widehat{E}[\phi|X_i,X_j,X_S]$. To estimate $E[\phi|X_i,X_S]$, we marginalize over the conditional distribution of $X_j$ given $X_S$:
\begin{align*}
    E[\phi|X_i,X_S] &= E\{E[\phi|X_i,X_j,X_S]|X_i,X_S\}\\
    &= \int E[\phi|X_i,X_j,X_S]p(x_j|x_i,x_S)dx_j\\
    &= \int E[\phi|X_i,X_j,X_S]p(x_j|x_S)dx_j,\\
\end{align*}
where the last equality follows from the conditional independence $X_i \ind X_j \mid X_S$. In practice, we estimate $p(x_j|x_S)$ empirically. Specifically, for each realization $X_i=x_i$ in the sample, we regress the super learner fitted values $\widehat{E}[\phi|X_i=x_i,X_j,X_S]$ ($X_i$ is fixed to $x_i$) on $X_S$ using a linear regression model. The linear regression fitted values provide an estimate of $\widehat{E}[\phi|X_i=x_i,X_S]$. An analogous procedure is used to estimate $E[\phi|X_j,X_S]$. 

For $E[\phi|X_S]$, we marginalize over the conditional distribution of both $X_i$ and $X_j$ given $X_S$:
\begin{align*}
    E[\phi|X_S] &= E\{E[\phi|X_i,X_j,X_S]|X_S\}\\
    &= \iint E[\phi|X_i,X_j,X_S]p(x_i,x_j|x_S)dx_i dx_j\\
    &= \iint E[\phi|X_i,X_j,X_S]p(x_i|x_S)p(x_j|x_S) dx_idx_j,
\end{align*}
where the last equality follows from $X_i \ind X_j \mid X_S$. We first integrate out $X_i$ to obtain 
\begin{equation*}
    E[\phi|X_j,X_S] = \int E[\phi|X_i,X_j,X_S]p(x_i|x_S)dx_i,
\end{equation*}
which is estimated using the same procedure described above. We then marginalize over $X_j$ by regressing the fitted values $\widehat{E}[\phi|X_j,X_S]$ on $X_S$ using linear regression. The resulting fitted values provide an estimate of $\widehat{E}[\phi|X_S]$.

\section{Projecting the influence function onto a misspecified submodel}\label{appendix: sim}

In Section~\ref{sec: sim}, we projected the IF onto  submodels defined by the true independence structure, as dictated by the DGP in our simulations. Now, consider the following DGP:
    \begin{align*}
        X_1 &\sim N(0,1), \\
        X_2 &= I\{X_1 + \varepsilon_2 \geq 0\},\quad \varepsilon_2 \sim N(0,.5^2),\\
        X_3 &= -0.5 + 2 \times X_1 + X_2 + \varepsilon_3,\quad \varepsilon_3 \sim N(0,1), \\
        X_4 &= 1.5 \times X_1X_2 + \varepsilon_4,\quad \varepsilon_4 \sim N(0,1), \\
        U &\sim N(0,1).
    \end{align*}
in which $X_1 \notind X_2$, $X_1 \notind X_3$ and $X_3 \notind X_4 \mid X_2$. A binary treatment and binary potential outcomes are generated using the logistic models:
    \begin{align*}
        \text{logit}\{P(T=1|X,U)\} &= X_1X_2 + X_1X_3 + X_2X_3X_4+ 0.2 \times U, \\
        \text{logit}\{P(Y(1)=1|X,U)\} &= X_1X_2 + X_1X_3 + X_2X_3X_4+ 0.2 \times U, \\
        \text{logit}\{P(Y(0)=1|X,U)\} &= 0.
    \end{align*}
We perform 500 simulation runs for each value of $\gamma \in \{-2,-1,-0,1,2\}$ with a sample size of $n=500$. To assess the impact of misspecifying the independence constraints, we project the IFs for $E[Y(1)]$ and $E[Y(0)]$ onto a submodel that incorrectly assumes $X_1 \ind X_2$, $X_1 \ind X_3$, and $X_3 \ind X_4 \mid X_2$.\footnote{Due to computational constraints, we only perform a single iteration within Algorithm~\ref{alg: proj IF}, i.e., we do not iterate over sequential projections, which remains valid under the data generating mechanism here, where the independence constraints correspond to a DAG factorization.} Our procedure here follows that of Examples 1 and 2. The results in Table~\ref{table: sim3} suggest that projecting the IF onto a misspecified submodel can lead to increased bias, despite reducing variance.

\begin{longtable}{c ccc c}
  \caption{Average causal effect and variance (in parentheses) estimates using the nonparametric and projected influence function (IF) ($\hat{\tau}^{\mathrm{NP}}$ and $\hat{\tau}^{\mathrm{Proj}}$), averaged over 500 Monte Carlo replications for different values of the sensitivity parameter $\gamma$. Bias is $|\hat{\tau} - \tau_{\gamma}^{\mathrm{oracle}}|$.} \\
  \label{table: sim3} \\
  \toprule
  $\gamma$ & $\hat{\tau}^{\mathrm{NP}}$ ($\widehat{\text{var}}$) & $\hat{\tau}^{\mathrm{Proj}}$ ($\widehat{\text{var}}$) & Bias$^{\mathrm{NP}}$ & Bias$^{\mathrm{Proj}}$ \\
  \midrule
  \endfirsthead
  \toprule
  $\gamma$ & $\hat{\tau}^{\mathrm{NP}}$ ($\widehat{\text{var}}$) & $\hat{\tau}^{\mathrm{Proj}}$ ($\widehat{\text{var}}$) & Bias$^{\mathrm{NP}}$ & Bias$^{\mathrm{Proj}}$ \\
  \midrule
  \endhead
  \bottomrule
  \endlastfoot
  $-2$ & 0.4910 (0.0076) & 0.3221 (0.0059) & 0.0491 & 0.1932 \\
  $-1$ & 0.4137 (0.0104) & 0.2876 (0.0086) & 0.0682 & 0.1679 \\
  $0$ & 0.2902 (0.0117) & 0.2024 (0.0099) & 0.0713 & 0.1336 \\
  $1$ & 0.1599 (0.0088) & 0.1053 (0.0076) & 0.0586 & 0.0924 \\
  $2$ & 0.0686 (0.0051) & 0.0399 (0.0044) & 0.0405 & 0.0629 \\
\end{longtable}

\section{Independence constraints for the data examples}

\begin{longtable}{p{3cm} p{7cm}}
    \caption{Marginal independencies among baseline covariates in the \citet{ahmad2017survival} dataset.}
    \label{table: heart indep}\\
    \toprule
    Variable & Marginally independent with... \\
    \midrule
    age & anemia, creatinine phosphokinase, high blood pressure, platelets count, serum sodium, gender, smoking status \\
    \midrule
    anemia & diabetes, high blood pressure, platelets count, serum sodium, gender, smoking status \\
    \midrule
    creatinine phosphokinase & diabetes, high blood pressure, platelets count, serum sodium, gender, smoking status \\
    \midrule
    diabetes & high blood pressure \\
    \midrule
    high blood pressure & serum sodium, gender, smoking status \\
    \midrule
    platelets count & gender, smoking status \\
    \midrule
    serum creatinine & gender, smoking status \\
    \midrule
    serum sodium & gender, smoking status \\
    \bottomrule
\end{longtable}

\begin{longtable}{p{3cm} p{7cm}}
    \caption{Marginal/conditional independencies among baseline covariates in the \citet{dehejia1999causal} dataset.}
    \label{table: lalonde indep}\\
    \toprule
    Variable & Independent with... \\
    \midrule
    age & education, Black, 1974 earnings, 1975 earnings, 1974 employment status (\textbf{conditional on} 1974 earnings), 1975 employment status (\textbf{conditional on} 1975 earnings) \\
    \midrule
    education & Black, marital status, 1974 earnings, 1975 earnings, 1974 employment status (\textbf{conditional on} 1974 earnings), 1975 employment status (\textbf{conditional on} 1975 earnings) \\
    \midrule
    Black & marital status, high school degree, 1974 earnings, 1974 employment status, 1975 employment status, 1975 earnings (\textbf{conditional on} 1975 employment status) \\
    \midrule
    marital status & 1974 employment status, 1974 earnings (\textbf{conditional} on 1974 employment status), 1975 earnings (\textbf{conditional} on 1975 employment status), 1975 employment status (\textbf{conditional on} education) \\
    \midrule
    high school degree & 1974 earnings, 1974 employment status, 1975 employment status, 1975 earnings (\textbf{conditional} on 1975 employment status) \\
    \midrule
    1974 earnings & 1975 earnings (\textbf{conditional} on 1974 employment status), 1975 employment status (\textbf{conditional} on 1974 employment status) \\
    \midrule
    1974 employment status & 1975 earnings (\textbf{conditional on} 1974 earnings), 1975 employment status (\textbf{conditional on} 1974 earnings) \\
    \bottomrule
\end{longtable}

\section{Bounding the bias of unobserved confounding}\label{appendix: ovb}

\subsection{Unobserved confounding bias}
Let $\tau$ be the target causal parameter and $\{X,U\}$ the full set of confounders, e.g., 
    \begin{itemize}
        \item for binary $T$,
            \begin{equation*}
                \tau=E\{E[Y|T=1,X,U]-E[Y|T=0,X,U]\},
            \end{equation*}
            the average causal effect (ACE);
        \item for continuous $T$,
            \begin{equation*}
                \tau=E\{\partial E[Y|T=t,X,U]/\partial t\},
            \end{equation*}
        the average causal derivative (ACD).
    \end{itemize}
Since $U$ is unobserved, we can only identify the ``short" parameter
    \begin{equation*}
        \tau_s=
            \begin{cases}
                E\{E[Y|T=1,X]-E[Y|T=0,X]\}, & \text{binary }T,\\
                E\{\partial E[Y|T=t,X]/\partial t\}, & \text{continuous }T,\\
            \end{cases}
    \end{equation*}
where the subscript $s$ indicates the shorter set of covariates without the unobserved $U$. The unobserved confounding bias is then defined as $\tau_s - \tau$ \citep{chernozhukov2022long}.

Both $\tau$ and $\tau_s$ can be expressed as inner products of the conditional outcome mean and their corresponding Riesz representers:
    \begin{align*}
        \tau &= E\{E[Y|T,X,U]\alpha(T,X,U)\},\\
        \tau_s &= E\{E[Y|T,X]\alpha_s(T,X)\},
    \end{align*}
where $\alpha$ and $\alpha_s$ are the unique Riesz representers for the full and reduced regression functions, respectively \citep{chernozhukov2022long}. See \citet{chernozhukov2022long} and \citet{chernozhukov2022riesznet} on learning the Riesz representation. 

\citet{chernozhukov2022long} show that the squared bias is bounded as 
    \begin{equation*}
        |\tau_s - \tau|^2 = \rho^2B^2 \leq B^2 := \sigma_s^2 \nu_s^2 \underbrace{\eta_{Y\sim U|T,X}^2}_{:= C_Y^2} \underbrace{\frac{\eta_{T\sim U|X}^2}{1-\eta_{T\sim U|X}^2}}_{:=C_T^2},
    \end{equation*}
where $\rho=\text{Cor}^2(E[Y|T,X,U]-E[Y|T,X], \alpha-\alpha_s)$, $\sigma_s^2 = E\{(Y-E[Y|T,X])^2\}$, $\nu_s^2 = E[\alpha_s^2]$, $\eta_{Y\sim U|T,X}^2$ and $\eta_{T\sim U|X}^2$ denote the proportion of the residual variation
in the outcome and treatment, respectively, that could be  explained by $U$. Plausible values of $\rho$, $\eta_{Y\sim U|T,X}^2$ and $\eta_{T\sim U|X}^2$ are specified by the analyst. The resulting bound takes the form:
    \begin{equation}\label{eq: cherno bounds}
        \tau_{\pm} = \tau_s \pm |\rho|\sigma_s \nu_s C_Y C_T.
    \end{equation}
    
\subsection{Semiparametric inference}
To get the bounds in eq.~\ref{eq: cherno bounds}, it is necessary to estimate $\tau_s$, $\sigma_s^2$ and $\nu_s^2$. \citet{chernozhukov2022long} derive the IFs $\phi_{\tau_s}$, $\phi_{\sigma^2_s}$ and $\phi_{\nu^2_s}$, and propose debiased machine learning (DML) estimators for $\tau_s$, $\sigma_s^2$ and $\nu_s^2$ (see \citep{chernozhukov2018double} for an introduction on DML, which is technically equivalent to one-step semiparametric estimation with cross-fitting). The estimator for the bounds is given by:
    \begin{equation*}
        \hat{\tau}_{\pm} = \hat{\tau}_s \pm |\rho|\hat{\sigma}_s \hat{\nu}_s C_Y C_T.
    \end{equation*}
Under regularity conditions, the bounds $\hat{\tau}_{\pm}$ have IFs (Theorem 4 of \citet{chernozhukov2022long}):
    \begin{equation}\label{eq: cherno bounds IF}
        \phi_\pm(Z) = \phi_{\tau_s}(Z) \pm \frac{|\rho|}{2}\frac{C_YC_T}{\sigma_s \nu_s}(\sigma_s^2 \phi_{\nu_s^2}(Z)+\nu_s^2 \phi_{\sigma_s^2}(Z)).
    \end{equation}
    
Given the independence structure among $X$, we project $\phi_{\tau_s}$, $\phi_{\sigma^2_s}$ and $\phi_{\nu^2_s}$ separately onto the submodel defined by these independencies; the estimator for the bounds is:
    \begin{equation*}
        \hat{\tau}^\text{proj}_\pm = \hat{\tau}_s^\text{proj} \pm |\rho|\hat{\sigma}_s^\text{proj} \hat{\nu}_s^\text{proj} C_Y C_T,
    \end{equation*}
where the superscript ``proj" indicates that the estimator is computed using the projected IF. Substituting the projected IFs, along with $\hat{\tau}_s^\text{proj}$, $\hat{\sigma}_s^\text{proj}$ and $\hat{\nu}_s^\text{proj}$ into eq.~\ref{eq: cherno bounds IF} yields the IFs of $\hat{\tau}^\text{proj}_\pm$, which are then used to compute their variances. Similar to the projection of the IFs for $E[Y(t)]$ under the sensitivity model discussed in the main text, the extent of efficiency gain may be affected by the choice of parameter values ($\rho$, $\eta_{Y\sim U|T,X}^2$, $\eta_{T\sim U|X}^2$), as well as the underlying data generating process.  

We generate 500 datasets of size $n=500$ to evaluate the application of Algorithm~\ref{alg: proj IF} within this sensitivity analysis framework. The observed covariates $X$ and the latent covariate $U$ are generated in the same way as in Example 1 in Section~\ref{sec: sim}, which implies the following independencies: $X_1 \ind X_2$, $X_1 \ind X_3$, and $X_3 \ind X_4 \mid X_2$. A binary treatment and continuous potential outcomes are generated as follows:
   \begin{align*}
        \text{logit}\{P(T=1|X,U)\} &= -0.2 + 0.5X_1 + 0.6X_2 - 0.3X_3 + 0.2X_4 + 0.7U, \\
        Y(0) &= \mu_0 + N(0, 1), \\
        Y(1) &= \mu_1 + N(0, 1),
    \end{align*}
where
    \begin{align*}
        &\mu_0 = -0.4 + 0.4X_1 - 0.3X_2 + 0.25X_3 + 0.2X_4 + 0.33U, \\
        &\mu_1 = \mu_0 + 0.8.
    \end{align*}
Our target parameter is the ACE, so for a given choice of $\{\rho, \eta_{Y\sim U|T,X}^2, \eta_{T\sim U|X}^2\}$, we get a set of lower and upper bounds on the ACE ($\hat{\tau}_-$ and $\hat{\tau}_+$). We set $\rho=1$ and $\eta_{Y\sim U|T,X}^2 = \eta_{T\sim U|X}^2 = \eta^2 \in \{.01, .05, .09, .13, .17\}$. For each simulated dataset, we obtain the ``short" IFs ($\phi_{\tau_s}$, $\phi_{\sigma^2_s}$, $\phi_{\nu^2_s}$) and estimates ($\hat{\tau}_s$, $\hat{\sigma}_s$, $\hat{\nu}_s$) via DML using the \texttt{dml.sensemakr} R package \citep{cinelli2025dml}.\footnote{The \texttt{dml} function in \texttt{dml.sensemakr} is used to implement DML under a fully nonparametric model for $Y$. By default, the nuisances are estimated by random forests.} We follow Algorithm~\ref{alg: proj IF} to project each of the short IFs onto the sudmodel that respects the three aforementioned independencies.\footnote{To reduce computational burden, we only perform a single iteration, i.e., we do not iterate over  sequential projections, which remains valid under the data generating mechanism here, where the independence constraints correspond to a DAG factorization.} Table~\ref{table: sim ovb} shows that the projection procedure yields slightly lower variance for both the upper and lower bound estimates. 

\begin{table}[htbp]
  \centering
  \caption{Estimates of the upper ($\hat{\tau}_+$) and lower ($\hat{\tau}_-$) bounds on the average causal effect, with their estimated variances in parentheses, computed from the nonparametric (NP) and projected (Proj) influence functions, averaged over 500 Monte Carlo replications for different sensitivity parameter values ($\eta^2$).}
  \label{table: sim ovb}
  \begin{tabular}{lcccc}
    \toprule
    $\eta^2$ & $\hat{\tau}^{\mathrm{NP}}_-\,(\widehat{\operatorname{var}})$ & $\hat{\tau}^{\mathrm{NP}}_+\,(\widehat{\operatorname{var}})$ & $\hat{\tau}^{\mathrm{Proj}}_-\,(\widehat{\operatorname{var}})$ & $\hat{\tau}^{\mathrm{Proj}}_+\,(\widehat{\operatorname{var}})$ \\
    \midrule
    0.01 & 1.00 (.0113) & 1.05 (.0113) & 1.00 (.0090) & 1.04 (.0090) \\
    0.05 & 0.91 (.0113) & 1.14 (.0114) & 0.91 (.0090) & 1.14 (.0091) \\
    0.09 & 0.81 (.0113) & 1.24 (.0114) & 0.81 (.0090) & 1.23 (.0091) \\
    0.13 & 0.71 (.0113) & 1.34 (.0116) & 0.71 (.0090) & 1.34 (.0092) \\
    0.17 & 0.61 (.0114) & 1.44 (.0117) & 0.60 (.0091) & 1.44 (.0093) \\
    \bottomrule
  \end{tabular}
\end{table}

\end{document}